\title{A faster algorithm for the Fréchet distance in 1D for the imbalanced case} 
\author{Lotte Blank}{University of Bonn, Germany }{lblank@uni-bonn.de}{}{}
\author{Anne Driemel}{University of Bonn, Germany}{driemel@cs.uni-bonn.de}{}{}
\authorrunning{L. Blank and A. Driemel} 
\keywords{\Fm, distance oracle, data structures, time series} 
\newcommand{\F}{{Fréchet distance }}
\newcommand{\Fm}{{Fréchet distance}}
\begin{document}
\maketitle
\begin{abstract}
The fine-grained complexity of computing the \F has been a topic of much recent work, starting with the quadratic SETH-based conditional lower bound by  Bringmann from 2014. Subsequent work established largely the same complexity lower bounds for the \F in 1D. However, the imbalanced case, which was shown by Bringmann to be tight in dimensions $d\geq 2$, was still left open.
Filling in this gap, we show that a faster algorithm for the \F in the imbalanced case is possible: Given two 1-dimensional curves of complexity $n$ and $n^{\alpha}$ for some $\alpha \in (0,1)$, 
we can compute their \F  in $O(n^{2\alpha} \log^2 n + n \log n)$ time. This rules out a conditional lower bound of the form $O((nm)^{1-\varepsilon})$ that Bringmann showed for $d \geq 2$ and any $\varepsilon>0$ in turn showing a strict separation with the setting $d=1$.
At the heart of our approach lies a data structure that stores a 1-dimensional curve $P$ of complexity $n$, and supports queries with a curve $Q$ of complexity~$m$ for the continuous \F between $P$ and $Q$. The data structure has size in $\mathcal{O}(n\log n)$ and uses query time in $\mathcal{O}(m^2 \log^2 n)$. 
Our proof uses a key lemma that is based on the concept of visiting orders and may be of independent interest. We demonstrate this by substantially simplifying the correctness proof of a clustering algorithm by Driemel, Krivo{\v s}ija and Sohler from 2015.

\end{abstract}
\newpage
\section{Introduction}
Since its introduction to Computational Geometry by Alt and Godau~\cite{AG95}, the  complexity of computing or approximating the \emph{\F} has been the topic of much research and debate~\cite{AAK2014,BBMM2017}. Alt conjectured in 2009~\cite{alt2009computational} that the associated decision problem may be 3-SUM hard, which would likely rule out subquadratic algorithms~\cite{gajentaan1995class}. With the advent of fine-grained complexity theory came the celebrated 
result by Bringmann~\cite{bringmann2014walking} in 2014, and we now know that strictly subquadratic algorithms with running time in $O(n^{2-\varepsilon})$ are indeed not possible, unless the Strong Exponential Time Hypothesis (SETH) fails. While Bringmann's construction uses curves in the plane, subsequent work by Bringmann and Mulzer~\cite{bringmann2016approximability} and later Buchin, Ophelders and Speckmann~\cite{BOS18} extended the result showing the same hardness also for curves in one dimension. Bringmann in his initial paper showed the tightness of the algorithm by Alt and Godau up to logarithmic factors also in the imbalanced case. Namely, given two polygonal curves in the plane of complexity $n$ and $m=n^{\alpha}$ for some $\alpha \in (0,1]$, there is no algorithm that computes the \F in time $O((nm)^{1-\varepsilon})$ for any $\varepsilon >0$ unless SETH fails. However, the imbalanced case was left open for curves in 1D in  \cite{bringmann2016approximability,BOS18}. 

In this paper, we fill this gap by showing a faster algorithm for computing the \F in one dimension ruling out lower bounds of the form $O((nm)^{1-\varepsilon})$ and thus showing a strict separation between the complexity in 1D and higher dimensions. 

\subparagraph*{Signatures and visiting orders}
In order to obtain our result, we give necessary and sufficient conditions for when the decision algorithm should accept a 1D instance. Our characterization uses the concept of \emph{signatures} which were earlier introduced by Driemel, Krivo{\v s}ija and Sohler~\cite{DKS16} in 2016. Intuitively, a $\delta$-signature of~$P$ is a curve that consists of important minima and maxima of $P$ preserving its overall shape. We combine this with the concept of  \emph{visiting orders} formalized by Bringmann, Driemel, Nusser and Psarros \cite{BDNP21}.
We show that the \F between two 1-dimensional curves is at most~$\delta$ if and only if there exist a so-called \emph{coupled $\delta$-visiting order}, subject to additional technical conditions. A coupled $\delta$-visiting order is a sequence of ordered tuples of important indices of the vertices that define the polygonal curves $P$ and $Q$ including the indices of the \emph{$\delta$-signature vertices}. 
Our  lemma leads to a shorter alternative proof of Theorem~3.7 of Driemel, Krivo{\v s}ija and Sohler~\cite{DKS16}, thereby substantially simplifying their proof of 17 pages in~\cite{DKS15}. Our proof makes use of a greedy algorithm devised by Bringmann et. al.~\cite{BDNP21}, which we slightly adapt to suit our needs.


\subparagraph*{\F  oracles}
We develop our results in the form of a \F  oracle and obtain a more refined result. A \F  oracle is a data structure that preprocesses an input curve to answer queries with a query curve for the \F between input and query curve. We give a brief history of results focussing on the continuous \Fm. 
\F oracles were introduced by Driemel and Har-Peled \cite{DH13} in 2012 in the context of their work on the shortcut \F in  fixed dimension $d$. They constructed a $(1+\varepsilon)$-approximate \F oracle for the case where the query curves are line segments. Their data structure uses space in~$\mathcal{O}(n(1/\varepsilon)^{2d}\cdot \log^2(1/\varepsilon))$ and query time in~$\mathcal{O}((1/\varepsilon)^2\log n\log\log n)$. It can also be used to answer queries to a sub-curve of $P$. They extended this data structure to support  queries of complexity $m$. In the extended version it has size~$\mathcal{O}(n\log n)$, uses query time in~$\mathcal{O}(m^2\log n\log(m\log n))$ and approximates the continuous \F up to a (large) constant factor. 
For exact \F queries, the picture looks much different. In a very recent result, Cheng and Huang~\cite{CH24} presented a data structure that builds upon point location in arrangements of polynomials of bounded degree. Their data structure has size in $\mathcal{O}(nm)^{\text{poly}(d,m)}$, expected query time in $\mathcal{O}((md)^{\mathcal{O}(1)}\log (nm))$ solving \F queries in $d$ dimensions, where the query complexity $m\geq 2$ is given at preprocessing time. 
Given the apparent difficulty of the problem, some works have focused on the restricted setting where the queries are line segments in the plane (i.e., $d=2$ and $m=2$) \cite{BMO17,BHOSSS22}. In this vein, Buchin, van der Hoog, Ophelders, Schlipf, Silveira, and Staals~\cite{BHOSSS22} 
describe a data structure that uses storage in $\mathcal{O}(nk^{3+\varepsilon}+n^2)$ and query time in $\mathcal{O}((n/k)\log^2 n+\log^4 n)$, where $k\in[1, \dotso, n]$ is a tradeoff parameter and $\varepsilon>0$ is an arbitrarily small constant.  
Another line of work studies approximate distance oracles in the discrete setting where the discrete \F is derived either from the Euclidean metric or from a general graph metric~\cite{AFKR24, driemel2019sublinear, DriemelHR22, FF23, HRW-2022}.


\subparagraph*{Distance approximation}
There has been an ongoing search for faster approximation algorithms~\cite{bringmann2016approximability, CR18, HO24, HKOS23}, however the gap between upper and lower bounds remains large. While the lower bounds do not rule out a subquadratic-time $3$-approximation, the best-known approximation factors that can be obtained in this regime remain polynomial in~$n$, even in 1D. Van der Horst and Ophelders show, given two 1-dimensional curves and a parameter $\alpha \in [1,n]$, one can $\alpha$-approximate the continuous \F  in time in $O(n \log^3 n + (n^2/{\alpha^3}) \log^2 n \log \log n)$~\cite{HO24} and this is currently the best possible in 1D.
For specialized classes of well-behaved curves better approximations are known~\cite{aronov2006frechet, buchin2017folding, DHW12, GMMW19}. 

\subparagraph*{Our Results}
Our results are for the exact computation of the continuous 
\F of 1-dimensional curves. Our data structure supports query curves of arbitrary complexity~$m$, which need not be known during preprocessing. We can preprocess a curve of complexity $n$ using storage and preprocessing time in $\mathcal{O}(n\log n)$ to support queries with time in $\mathcal{O}(m^{2}\log^2 n)$. Our data structure  leads to several improvements for the exact computation of the \F in 1D: First, it leads to an improved running time for the decision problem of the \F in the case that the complexity of the $\delta$-signatures is low. Specifically, given two curves of complexity $n$ and $m$, with $n\geq m$, we obtain an exact algorithm with running time in $\mathcal{O}(s_P s_Q\log n+n\log n)$, where $s_P$ and $s_Q$ denote the complexities of the $\delta$-signatures of the two curves. Second, if $m=n^\alpha$ for some $\alpha \in (0,1)$, we can compute the continuous \F in $\mathcal{O}(n^{2\alpha}\log^{2} n+n\log n)$ time. This  improves upon the running time of $\mathcal{O}(n^{1+\alpha}\log n)$ of the classical algorithm~\cite{AG95} and rules out lower bounds of type shown in~\cite{bringmann2014walking} for the imbalanced case.


\section{Preliminaries}
For any two points $p_1, p_2\in \mathbb{R}$, $\overline{p_1 p_2}$ denotes the directed line segment connecting $p_1$ with~$p_2$. A time series $P$ of complexity $n$ is formed by ordered line segments $\overline{P(i) P(i+1)}$ of points $P(1), P(2), \dotso, P(n)$, where $P(i)\in \mathbb{R}$. We obtain a polygonal curve that can be viewed as a function $P:[1, n]\rightarrow \mathbb{R}$, where $P(i+\alpha)=(1-\alpha)P(i)+\alpha P(i+1)$ for $i\in \{1, \dotso, n\}$ and $\alpha\in[0,1]$. This curve is also denoted with $\langle P(1), \dotso, P(n)\rangle$. We call the points $P(i)$ vertices and the line segments $\overline{P(i) P(i+1)}$ edges of $P$. Further, $P[s_1, s_2]$ is the subcurve of~$P$ starting in $P(s_1)$ and ending in $P(s_2)$ for $s_1\leq s_2$. We define $\min(P[s_1, s_2])=\min\{P(s)|\ s\in [s_1, s_2]\}$ and $\arg\min(P[s_1, s_2])$ to be an $s_1\leq s\leq s_2$ such that $P(s)=\min(P[s_1, s_2])$. The $\delta$-range of a time series $P$ is the interval $B(P, \delta)=\{x|\ \exists s\in [1,n] \text{ s.t. } |x-P(s)|\leq \delta\} $. 

Let $P: [1,n]\rightarrow \mathbb{R}$ and ${Q:[1,m]\rightarrow \mathbb{R}}$ be two time series. Then, the (continuous) \emph{\F} between them is defined as
\[ d_{\text{F}}(P, Q)=\min_{h_P\in \mathcal{F}_P, h_Q\in \mathcal{F}_Q }\ \max_{a\in [0,1]} | P(h_P(a))-Q(h_Q(a))|,\]
where $\mathcal{F}_P$ is the set of all continuous, non-decreasing functions $h_P: [0,1]\rightarrow [1,n]$ with ${h_P(0)=1}$ and $h_P(1)=n$, respectively $\mathcal{F}_Q$ for $Q$. We say that a point $Q(t)$ is \emph{matchable} to a point $P(s)$ for a value $\delta\geq 0$ if there exist functions $h_P\in \mathcal{F}_P$ and $h_Q\in \mathcal{F}_Q$ and a $b\in [0, 1]$ such that $\max_{a\in [0,1]} | P(h_P(a))-Q(h_Q(a))|\leq \delta$ and $h_P(b)=s$ and $h_Q(b)=t$. 

\begin{observation}\label{o:concatanation}
    Let $s\leq n$ and $t\leq m$. If $d_F(P[1, s], Q[1, t]), d_F(P[s,n], Q[t,m])\leq\delta$, then $d_F(P[1,n],Q[1,m])\leq \delta$.
\end{observation}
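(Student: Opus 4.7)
The plan is a direct concatenation of the two reparametrizations witnessing the Fréchet distances on the two halves. Suppose $h_P^1 \in \mathcal{F}_{P[1,s]}$, $h_Q^1 \in \mathcal{F}_{Q[1,t]}$ realize (or $\varepsilon$-approximate) $d_F(P[1,s], Q[1,t]) \leq \delta$, and similarly $h_P^2 \in \mathcal{F}_{P[s,n]}$, $h_Q^2 \in \mathcal{F}_{Q[t,m]}$ realize $d_F(P[s,n], Q[t,m]) \leq \delta$. Since the Fréchet distance is attained (or can be approached) by such reparametrizations, it suffices to build a single pair $(h_P, h_Q) \in \mathcal{F}_P \times \mathcal{F}_Q$ whose maximum pointwise discrepancy is at most $\delta$.

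First I would glue the two reparametrizations by time-rescaling. Define
\[
h_P(a) = \begin{cases} h_P^1(2a), & a \in [0, 1/2], \\ h_P^2(2a - 1), & a \in [1/2, 1], \end{cases}
\qquad
h_Q(a) = \begin{cases} h_Q^1(2a), & a \in [0, 1/2], \\ h_Q^2(2a - 1), & a \in [1/2, 1]. \end{cases}
\]
The endpoint conditions $h_P^1(0)=1$, $h_P^1(1)=s$, $h_P^2(0)=s$, $h_P^2(1)=n$ guarantee continuity at $a = 1/2$ and give $h_P(0) = 1$, $h_P(1) = n$; the analogous statements hold for $h_Q$ with the splitting value $t$. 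Monotonicity of $h_P$ and $h_Q$ on each half implies monotonicity on $[0,1]$, because at the gluing point both pieces agree (at $s$, respectively $t$). Thus $h_P \in \mathcal{F}_P$ and $h_Q \in \mathcal{F}_Q$.

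Finally, the pointwise discrepancy satisfies
\[
\max_{a \in [0,1]} |P(h_P(a)) - Q(h_Q(a))| = \max\!\left( \max_{a \in [0,1/2]} |P(h_P(a)) - Q(h_Q(a))|,\ \max_{a \in [1/2,1]} |P(h_P(a)) - Q(h_Q(a))| \right),
\]
and each of the two inner maxima is bounded by $\delta$ by the assumed Fréchet matchings on the two halves. Hence $d_F(P, Q) \leq \delta$.

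The only subtlety is that $d_F$ is defined as an infimum that may or may not be attained by a single pair of reparametrizations; if one prefers to avoid invoking existence, the same argument goes through by taking $\varepsilon$-optimal reparametrizations on each half and letting $\varepsilon \to 0$, which yields $d_F(P, Q) \leq \delta + 2\varepsilon$ for every $\varepsilon > 0$. I do not anticipate any real obstacle beyond bookkeeping for the gluing.
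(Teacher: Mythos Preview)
Your proof is correct and is exactly the standard concatenation argument one would expect. The paper states this as an observation without proof, so there is nothing to compare against; your gluing of the two reparametrizations at the common point $(s,t)$ is the intended justification.
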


The next lemma is a generalization of Lemma~37 of Bringmann, Driemel, Nusser and Psarros \cite{BDNP21} and will be used to prove \cref{t:coupled visiting order}. We need the definition of a $\delta$-monotone time series. Let $P:[1, n]\rightarrow \mathbb{R}$ be a time series, then it is $\delta$-monotone increasing (resp. decreasing) if for all $s<s'\in [1, n]$, it holds that $P(s)\leq P(s')+\delta$ (resp. $P(s)\geq P(s')-\delta$).
\begin{lemma}[Generalization of Lemma~37 in \cite{BDNP21}] \label{l:VarLemma37}
    For two time series $P=\langle P(1), \dotso, P(n)\rangle$ and ${Q=\langle Q(1), \dotso Q(m)\rangle}$, it holds that $d_F(P,Q)\leq \delta$ if
    \begin{enumerate}[(i)]
        \item $P, Q$ are $2\delta$-monotone increasing (resp. decreasing),
        \item $|P(1)-Q(1)|\leq \delta$, $|P(n)-Q(m)|\leq \delta$,
        \item $P\subseteq B(Q, \delta)$, $Q\subseteq B(P, \delta)$, 
        \item $P(1)$ or $Q(1)$ is a global minimum (resp. maximum) of its time series, and
        \item $P(n)$ or $Q(m)$ is a global maximum (resp. minimum) of its time series.
    \end{enumerate}
\end{lemma}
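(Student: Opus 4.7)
By reversing parameterizations, it suffices to handle the $2\delta$-monotone increasing case; the decreasing case is symmetric. Condition (iv) says $P(1)$ or $Q(1)$ is a global minimum of its curve, so, relabelling $P \leftrightarrow Q$ if needed, I may assume $P(1) = \min P$. Analogously, using (v), I assume $P(n) = \max P$. Condition (ii) then pins both starting values to lie within $\delta$ of $P(1) = \min P$ and both ending values within $\delta$ of $P(n) = \max P$, so the matching has ``anchors'' at both ends.

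My plan is to build a Fréchet matching explicitly as a monotone staircase path in the free-space region
$\mathcal{F}_\delta = \{(s,t) \in [1,n]\times[1,m] : |P(s)-Q(t)| \leq \delta\}$
from $(1,1)$ to $(n,m)$, via a greedy sweep adapted from the proof of Lemma~37 of~\cite{BDNP21}. Starting at $(s,t)=(1,1)$, in each iteration I advance the ``lagging'' coordinate: if $P(s) \leq Q(t)$, push $s$ forward along $P$ maximally so that the axis-parallel segment $\{(s'',t) : s \leq s'' \leq s_{\text{new}}\}$ stays inside the tube $|P(\cdot) - Q(t)| \leq \delta$, and symmetrically if $P(s) > Q(t)$. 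Feasibility of each single step is ensured by combining condition~(iii) (each value of one curve is within $\delta$ of the other curve's range) with condition~(i) (local dips have amplitude at most $2\delta$), which together prevent the $\delta$-tube from pinching off except at the endpoints. Linearly interpolating between successive pairs turns the staircase into parameterizations $h_P \in \mathcal{F}_P$ and $h_Q \in \mathcal{F}_Q$ with $|P(h_P(a)) - Q(h_Q(a))| \leq \delta$ throughout.

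The main obstacle I anticipate is ruling out that the sweep stalls strictly before $(n,m)$. A stall at $(s,t)\neq(n,m)$ would force $P$ to jump above $Q(t)+\delta$ immediately past $s$ and $Q$ to jump below $P(s)-\delta$ immediately past $t$ simultaneously; I plan to derive a contradiction by combining $2\delta$-monotone increase with condition~(iii) applied to the remaining subcurves $P[s,n]$ and $Q[t,m]$. Since $P(n) = \max P$ and $|P(n)-Q(m)| \leq \delta$, the ranges of $P[s,n]$ and $Q[t,m]$ must overlap in an interval strictly wider than $\delta$, and then (i)+(iii) produces an advanceable coordinate, contradicting the stall. The resulting complete monotone path in $\mathcal{F}_\delta$ witnesses $d_F(P,Q) \leq \delta$.
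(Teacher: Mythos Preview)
Your reduction in the first paragraph is flawed. After swapping $P \leftrightarrow Q$ to arrange $P(1) = \min P$ from condition~(iv), you cannot independently swap again to also arrange $P(n) = \max P$ from condition~(v): the second swap may undo the first. The case where the global minimum at the start lies on one curve and the global maximum at the end lies on the \emph{other} curve (for instance $P(1) = \min P$ and $Q(m) = \max Q$, while $P(n) \neq \max P$ and $Q(1) \neq \min Q$) is not reducible by relabelling, nor by reversing both parameterizations, to the situation where both extrema sit on the same curve. This asymmetric case is exactly what distinguishes the present lemma from Lemma~37 of~\cite{BDNP21}, which assumed both extrema on $Q$. Since your stall argument explicitly invokes $P(n) = \max P$, it does not cover this case as written.

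The paper's proof fixes only one side by symmetry (say $Q(1) = \min Q$) and then runs the explicit traversal of~\cite{BDNP21} with its two invariants ($P(s) = Q(t)+\delta$, and the suffix of $Q$ after $t$ strictly exceeds $Q(t)$). The new content is a short case analysis of what happens when the traversal first reaches the end of one of the two curves, using conditions~(i),~(iii) and~(v) to show the remaining tail of the other curve stays within~$\delta$ of that endpoint. Your ``advance the lagging coordinate maximally'' sweep is a different mechanism from this, and even in the symmetric case your stall argument (``the ranges of $P[s,n]$ and $Q[t,m]$ overlap in an interval strictly wider than $\delta$, so some coordinate is advanceable'') is not yet a proof: you would need to pin down exactly what values $P(s)$ and $Q(t)$ take at a stall and derive a concrete contradiction from (i) and (iii).
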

The lemma can be proven using the same arguments as the proof of Lemma~37 in \cite{BDNP21}. For the sake of completeness, we provide a full proof in the appendix.

\subsection{Signatures and visiting orders}

Signatures and (coupled) visiting orders are key concepts used in this paper. 
\Cref{fig:visiting-order} gives an example of a $\delta$-signature.

\begin{definition}[$\delta$-signature]
    Let $P=\langle P(1), \dotso, P(n)\rangle$ be a time series and $\delta\geq 0$.
    Then, a \emph{$\delta$-signature} $P'=\langle P(i_1), \dotso, P(i_t)\rangle$ with $1=i_1<\cdots <i_t=n$ of $P$ is a time series with the following properties:
    \begin{enumerate}[(a)]
        \item \emph{(non-degenerate)} For $k=2, \dotso, t-1$, it holds $P(i_k)\notin \overline{P(i_{k-1}) P(i_{k+1})}$.
        \item \emph{($2\delta$-monotone)} For $k=1, \dotso, t-1$, $P[i_k, i_{k+1}]$ is $2\delta$-monotone increasing or decreasing.
        \item \emph{(minimum edge length)} If $t>2$, then for $k=2, \dotso, t-2$,  $|P(i_k)-P(i_{k+1})|>2\delta$, $|P(i_1)-P(i_2)|>\delta$, and $|P(i_{t-1})-P(i_t)|>\delta$.
        \item \emph{(range)} 
             For $k=2, \dotso, t-2$, it holds that $P(s)\in \overline{ P(i_{k}) P(i_{k+1})}$ for all $s\in [i_{k}, i_{k+1}]$,\\
            if $t>2$, then it holds $P(s)\in B(P(1), \delta)\cup \overline{P(1) P(i_2)}$ for all $s\in [1, i_2]$,\\
             if $t>2$, then it holds $P(s)\in B(P(n), \delta)\cup \overline{P(i_{t-1}) P(n)}$ for all $s\in [i_{t-1}, n]$, and\\
            if $t=2$, then $P(s)\in B(\overline{P(1)P(n)}, \delta)$ for all $s\in [1, n]$.
    \end{enumerate}
\end{definition}
Driemel, Krivo{\v s}ija and Sohler \cite{DKS16} showed that for every $\delta$ there exists a unique $\delta$-signature. 
We define an \emph{extended $\delta$-signature} $P''$ in the following way: If $t=2$, then $P''$ consists of $P(1)$ and $P(n)$ and the global minimum and global maximum of $P$ in order of their indices. Otherwise, we use $P'$ as defined above and we add two vertices with indices $1\leq i'< i_2$ and $i_{t-1}< i''\leq n$ such that $P[1, i_2]\subset \overline{P(i') P(i_2)}$ and $P[i_{t-1}, n]\subset \overline{P(i_{t-1}) P(i'')}$. The vertices of the extended $\delta$-signature $P''$ are called \emph{$\delta$-signature vertices} of $P$. Further, we say a point $P(s)$ of $P$ is \emph{supported by the $\delta$-signature edge} $\overline{P(i_k) P(i_{k+1})}$ if $i_k\leq s\leq i_{k+1}$. It holds that the subcurve of $P$ from the first until the second (resp. from the second last until the last) $\delta$-signature vertex of $P$ is contained in a $2\delta$-range.

The signatures have a unique hierarchical structure as the following lemma shows.
\begin{lemma}[Driemel, Krivo{\v s}ija and Sohler \cite{DKS16}]\label{l:hierachicalSignature}
Given a time series $P:[1, n]\rightarrow \mathbb{R}$ with vertices in general position, there exists a series of signatures $P_1, \dotso, P_k$ and corresponding parameters $0=\delta_1 <\delta_2<\cdots<\delta_{k+1}$, such that
\begin{enumerate}[(a)]
    \item $P_i$ is a $\delta$-signature of $P$ for any $\delta\in [\delta_i, \delta_{i+1})$,
    \item the vertex set of $P_{i+1}$ is a subset of the vertex set of $P_i$,
    \item $P_k$ is the linear interpolation of $P(1)$ and $P(n)$.
\end{enumerate}
\end{lemma}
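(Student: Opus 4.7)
The plan is to construct the sequence inductively by peeling off the shortest edge of the current signature at each stage. For the base case set $\delta_1 = 0$ and take $P_1$ to be the subsequence of vertices of $P$ consisting of $P(1)$, $P(n)$, and all strict local extrema of $P$, in order. General position guarantees that no three consecutive vertices of $P_1$ are collinear, so signature condition (a) holds; strict monotonicity between consecutive extrema yields condition (b) with $\delta=0$; condition (c) at $\delta=0$ reduces to distinct neighbouring values, again by general position; and condition (d) is trivial from monotonicity. Hence $P_1$ is a $0$-signature.

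For the inductive step, assume $P_i=\langle P(j_1),\dots,P(j_t)\rangle$ is a $\delta_i$-signature. The first observation is that only condition (c) tightens as $\delta$ grows: (a) and the interior case of (d) are $\delta$-independent, while (b) and the boundary case of (d) both relax. Accordingly set
\[
  \delta_{i+1} \;=\; \min\Bigl(\min_{2\le k\le t-2}\tfrac{1}{2}\bigl|P(j_k)-P(j_{k+1})\bigr|,\ \bigl|P(j_1)-P(j_2)\bigr|,\ \bigl|P(j_{t-1})-P(j_t)\bigr|\Bigr).
\]
For every $\delta\in[\delta_i,\delta_{i+1})$ all four conditions still hold strictly for $P_i$, establishing item (a) of the lemma. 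By general position the minimum is attained at a unique edge $e^\ast$. Define $P_{i+1}$ by removing from $P_i$ both endpoints of $e^\ast$ when it is an interior edge, or the single inner endpoint when it is the first or last edge; item (b) is immediate.

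The main verification is that $P_{i+1}$ is a valid $\delta$-signature for $\delta$ just above $\delta_{i+1}$. The key structural fact is that consecutive edges of a signature alternate direction in $1$D (a consequence of (a)), so when we delete the endpoints of an interior edge $e^\ast$ of length exactly $2\delta_{i+1}$, the new merged edge continues in the same direction as its two former neighbours, of lengths $a,c>2\delta_{i+1}$. Its length equals $a+c-2\delta_{i+1}>2\delta_{i+1}$, restoring (c); conditions (b) and the interior version of (d) for the merged edge follow by concatenating the three underlying subcurves of $P$ and using that both removed extrema lie inside the interval spanned by the surviving endpoints, so the composite subcurve stays within that interval and remains $2\delta_{i+1}$-monotone in the correct direction.

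The main obstacle is the boundary case: when $e^\ast$ is a first or last edge one vertex is removed and the new outer edge must inherit only the weaker minimum-length threshold of $\delta_{i+1}$ and the relaxed range condition from (d). The plan here is to exploit the extended-signature structure from the previous step, which guarantees that the relevant subcurve of $P$ is already contained in a $2\delta_{i+1}$-range around the new boundary edge; combined with the alternation of edge directions, this supplies the required conditions on the newly exposed vertex. Since each iteration strictly decreases $|P_i|$, the process terminates at $P_k$ consisting only of $P(1)$ and $P(n)$, which is the segment required by item (c).
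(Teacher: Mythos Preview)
The paper does not prove this lemma; it is quoted from Driemel, Krivo\v{s}ija and Sohler~\cite{DKS16} and used as a black box (specifically, in the proof of \cref{t:decision}). There is therefore no proof in the present paper to compare your attempt against.

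That said, your plan is the natural one and matches the construction in~\cite{DKS16}: start from the $0$-signature of local extrema and iteratively collapse the edge that first violates the minimum-edge-length condition~(c). Your observation that only~(c) tightens as $\delta$ grows is correct, and the interior-edge merge is handled properly. The boundary case, which you yourself flag as the main obstacle, is genuinely more delicate than your final paragraph resolves. When the first edge $\overline{P(j_1)P(j_2)}$ has length exactly $\delta_{i+1}$ and you delete $P(j_2)$, the new first edge $\overline{P(j_1)P(j_3)}$ reverses direction, and the range condition~(d) must be re-established on $P[j_1,j_3]$ relative to the \emph{new} interval $[\min(P(j_1),P(j_3)),\max(P(j_1),P(j_3))]$. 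The prior bound on $P[j_1,j_2]$ only confines it to $B(\overline{P(j_1)P(j_2)},\delta_i)$, which after the direction flip can be off by $\delta_i$ from what is required. Appealing to ``the extended-signature structure from the previous step'' does not by itself close this gap, because that structure presupposes that $P(j_2)$ is the global extremum of $P[j_1,j_2]$---a fact that is not a consequence of the signature axioms alone and must be carried as an explicit invariant through the induction. The original proof in~\cite{DKS16} maintains exactly such an invariant; your sketch would need to make this explicit to be complete.
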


With this hierarchical structure, it is possible to construct the following data structure.
\begin{theorem}[Driemel, Krivo{\v s}ija and Sohler \cite{DKS16}]\label{t:DS_signatures}
    Given a time series $P:[1, n]\rightarrow \mathbb{R}$ with vertices in general position, we can construct a data structure in time in $\mathcal{O}(n \log n)$ and space in $\mathcal{O}(n)$, such that given a parameter $l$ we can extract in time in $\mathcal{O}(l\log l)$ a signature of maximal size $l'$ with $l'\leq l$.
\end{theorem}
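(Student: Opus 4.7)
The plan is to exploit the nested structure provided by the preceding hierarchical lemma. For every interior vertex $v$ of $P$, let $\delta^{\star}(v)$ denote the smallest $\delta$ such that $v$ no longer appears as a vertex of a $\delta$-signature of $P$. By the hierarchical lemma, these values partition the vertices of $P$ into the levels $P_1 \supseteq P_2 \supseteq \cdots \supseteq P_k$; concretely, $P_i$ consists of the two endpoints together with the vertices $v$ with $\delta^{\star}(v) > \delta_i$. Hence storing the pairs $(v, \delta^{\star}(v))$ is enough to reconstruct any level on demand.

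For preprocessing, I would compute the values $\delta^{\star}(v)$ with a greedy priority-queue procedure. Starting from the initial signature $P_1 = P$, I maintain a doubly linked list of the current signature together with a min-heap keyed by the threshold at which each interior vertex would first violate one of the signature conditions (non-degeneracy, $2\delta$-monotonicity, minimum edge length, or range). Since each such threshold depends only on the two neighbours of $v$ in the current signature, it is computable in $\mathcal{O}(1)$. I then repeatedly pop the vertex $v^{*}$ with smallest threshold, record that threshold as $\delta^{\star}(v^{*})$, splice $v^{*}$ out of the list, and refresh the thresholds of its (at most two) new neighbours. Each such update costs $\mathcal{O}(\log n)$ and exactly $n-2$ vertices are processed, for a total preprocessing cost of $\mathcal{O}(n \log n)$ in $\mathcal{O}(n)$ space. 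At the end, I sort the vertices by $\delta^{\star}$ in decreasing order (with $+\infty$ assigned to the two endpoints) into an array $A$ augmented with each vertex's original index in $P$, and also tabulate the sizes $|P_1|, |P_2|, \ldots, |P_k|$.

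At query time, given $l$, I first binary-search the size table in $\mathcal{O}(\log n)$ time to locate the largest $|P_i| =: l'$ with $l' \leq l$. I then copy the first $l'$ entries of $A$, which by the nested property is exactly the vertex set of $P_i$, and sort them by their original index in $P$ in $\mathcal{O}(l' \log l') = \mathcal{O}(l \log l)$ time. Emitting them in index order yields a $\delta$-signature of $P$ of maximal size $l' \leq l$.

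The main obstacle I anticipate is justifying that the greedy removal order faithfully reproduces the hierarchy $P_1,\dots,P_k$. This requires arguing that each of the four signature conditions is local, so that the threshold at which a vertex first fails is determined by a constant-size neighbourhood in the current signature, and that the general-position assumption rules out the degenerate case of two vertices becoming simultaneously removable. Once this locality is established, both correctness of the priority-queue procedure and the claimed time and space bounds follow immediately from the standard heap analysis.
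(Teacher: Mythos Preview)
The paper does not prove this theorem; it is quoted from \cite{DKS16}. The only hint the paper gives about the data structure's internals appears later, inside the proof of Theorem~\ref{t:decision}, where it says the structure ``stores the vertices of $P$ exactly once together with some token separators such that the vertices after the $i$-th separator belong to $P_i$''---which is exactly your array $A$ sorted by decreasing $\delta^{\star}$, with separators at the level boundaries. So your high-level design coincides with what the paper implicitly relies on.

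Two minor points on your proposal itself. First, the $\mathcal{O}(\log n)$ binary search over the size table does not fit inside the stated $\mathcal{O}(l\log l)$ query bound when $l$ is small; replace it by a linear scan over the first $l$ entries of $A$ to locate the last separator, which costs $\mathcal{O}(l)$. Second, in the actual construction of \cite{DKS16} the heap is keyed by signature \emph{edge} lengths rather than per-vertex thresholds, and when the shortest interior edge drops below $2\delta$ \emph{both} of its endpoints are removed together, so that the remaining sequence still alternates between local minima and maxima. Removing a single vertex at a time, as you describe, would momentarily break the non-degeneracy condition and leave the neighbour updates ill-defined. This does not affect the asymptotics, but it is the concrete form of the locality argument you correctly flag as the crux.
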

If we store in addition the points $(1, P(1)), (2, P(2)), \dotso, (n, P(n))$ in the data structure of \cref{t:successorProblem}, it is possible to compute the extended signature of $P$ of maximal complexity $l'$ with $l'\leq l$ in $\mathcal{O}(l\log l+\log n)$ time.

\begin{figure}
    \centering
    \includegraphics[page=3]{Figures/coupled_visisiting_order.pdf}
    \caption{In this paper, the vertices of the time series are drawn as vertical segments for clarity. The $\delta$-signature vertices are marked with a red disk and $((i_1, j_1), (i_2, v_2),(i_3, j_2), (i_4, w_4),$ $ (i_5, w_5), (v_3, j_3), (v_4, j_4), (i_6, j_5), (i_7, j_6))$ is a coupled $\delta$-visiting order.}
    \label{fig:visiting-order}
\end{figure}

We adopt the concept of $\delta$-visiting orders as defined by Bringmann et. al. \cite{BDNP21}. A $\delta$-visiting order of $Q$ on $P$ for vertices $Q(j_1), \dotso, Q(j_t)$ consists of indices $v_1\leq\cdots\leq v_t$ of vertices of~$P$ such that $|P(v_k)-Q(j_k)|\leq \delta$ for all $k$. We extend this notion as follows. Our idea is to fix vertices of both time series and to interleave two $\delta$-visiting orders---one of $Q$ on $P$ and one of $P$ on $Q$. 
\Cref{fig:visiting-order} shows an example of a coupled $\delta$-visiting order, which we define further below.


\begin{definition}[coupled $\delta$-visiting order]
    Consider two time series $P=\langle P(1), \dotso, P(n)\rangle$ and $Q=\langle Q(1), \dotso, Q(m)\rangle$. Let $v_1\leq\cdots \leq v_{s_Q}$ be a $\delta$-visiting order of $Q$ on $P$ for the $\delta$-signature vertices $Q(j_1), \dotso, Q(j_{s_Q})$ and $w_1\leq\cdots \leq w_{s_P}$ be a $\delta$-visiting order of $P$ on~$Q$ for the $\delta$-signature vertices $P(i_1), \dotso, P(i_{s_P})$. 
    These two $\delta$-visiting orders are said to be crossing-free if there exists no $k, l$ such that $i_k< v_l$ and $j_l<w_k$, or $v_k< i_l$ and $w_l<j_k$. In this case, the ordered sequence containing all tuples $(v_k, j_k)$ and $(i_l, w_l)$, where $k=1, \dotso, s_Q$ and $l=1, \dotso, s_P$, is called \emph{coupled $\delta$-visiting order}.
\end{definition}

For the sake of a clean presentation, we allow to add a vertex to the time series $P$ that lies on one of the edges between the first and the second $\delta$-signature vertex of $P$ as well as one vertex between the second to last and the last $\delta$-signature vertex of $P$, resp. for $Q$ (see $Q(v_2)$ in \cref{fig:visiting-order}).  Adding those at most four vertices does not change the \Fm, it only helps us defining the $\delta$-visiting orders via vertex indices. Our algorithm described in Section~\ref{s:StBD} chooses the parameter for these vertices (named $v_{1,2}$, $w_{2,1}$, $\tilde{v}$, and $\tilde{w}$) when handling the traversals at the beginning and the ending parts of the two curves. In this context, we call these vertices \emph{pseudo-start}, resp., \emph{pseudo-end} \emph{vertex}. It is important to note that they are chosen with respect to the other curve.

\subsection{Orthogonal Range Successor Problem}

The \emph{Orthogonal range successor problem} is a variant of orthogonal range reporting. Here, a set $S$ of points must be stored such that the point in $S$ with the smallest $x$-coordinate (resp. smallest $y$-coordinate) contained in a query range $Q$ can be reported efficiently. The query range is an axis-aligned hyper-rectangle. For points in the plane, the \emph{layered range tree} is a modified version of range trees which uses fractional cascading \cite{CG86}, and can be used to solve the orthogonal range successor problem. For a comprehensive overview and further details, see Agarwal \cite{A04}. 

\begin{theorem}[\cite{A04, CG86}] \label{t:successorProblem}
In the pointer machine model, there exist a data structure solving the orthogonal range successor problem in the plane of size and preprocessing time in $\mathcal{O}(n\log n)$ that uses query time in $\mathcal{O}(\log n)$. In addition, it can return the number $m$ of stored points contained in an axis-aligned query rectangle in time in $\mathcal{O}(\log n)$ and those $m$ points can be returned in $\mathcal{O}(\log n+m)$ time.
\end{theorem}

\section{Key Lemma}
Driemel, Krivo{\v s}ija  and Sohler \cite{DKS16} showed that if $d_F(P,Q)\leq \delta$, then there exists a $\delta$-visiting order for the $\delta$-signature vertices of $P$ on $Q$. We strengthen their statement and show an equivalence that characterizes the \F in one dimension using the coupled $\delta$-visiting order. In the proof, we use the following lemma, which shows that if there exists any coupled $\delta$-visiting order, then there exists one of a specific form.

\begin{lemma}\label{l:visiting_order_wproperties}
    If there exists a coupled $\delta$-visiting order $((v_1, w_1), \dotso, (v_t, w_t))$ of $P$ and $Q$ with $d_F(P[1, v_2], Q[1, w_2])\leq \delta$ and $d_F(P[v_{t-1}, n], Q[w_{t-1}, m])\leq \delta$, then there exists a coupled $\delta$-visiting order $((v'_1, w'_1), \dotso, (v'_{t'}, w'_{t'}))$ with
    \begin{enumerate}[(a)]
        \item $d_F(P[1, v'_2], Q[1, w'_2])\leq \delta$ and $d_F(P[v'_{t'-1}, n], Q[w'_{t'-1}, m])\leq \delta$,
        \item $v'_2<v'_3<\cdots<v'_{t'-1}$ and $w'_2<w'_3<\cdots< w'_{t'-1}$, 
        \item $|P(v'_k)-Q(w'_{k+1})|>\delta$, $|P(v'_{k+1})-Q(w'_k)|>\delta$ for $k=2, \dotso, t'-2$, and
        \item $P[v'_k, v'_{k+1}]$ and $Q[w'_k, w'_{k+1}]$ are both $2\delta$-monotone increasing or both $2\delta$-monotone decreasing for $k=1, \dotso, t'-1$.
    \end{enumerate}
\end{lemma}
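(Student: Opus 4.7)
The plan is to transform the given coupled $\delta$-visiting order by a sequence of local modifications (merging two consecutive tuples when their union can be represented by a single valid match, or relocating a non-signature visiting index) that preserve the boundary conditions in (a) while progressively enforcing (b), (c), and (d). The first two and last two tuples will remain untouched, so (a) is preserved automatically throughout.

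First I would enforce the strict monotonicity in (b). Suppose $v_k = v_{k+1}$ for some middle $k$. If neither coordinate is a $\delta$-signature vertex of $P$, the two tuples can be replaced by a single tuple $(v_k, w_{k+1})$, which is a valid match by the triangle inequality. Otherwise at least one of the tuples carries a signature vertex of $P$, and I would argue using the minimum edge length property of signatures that such a configuration is either impossible (contradicting $|P(i_l) - P(i_{l'})| > 2\delta$ for consecutive middle signature vertices) or admits a merged tuple that still represents both signature vertices. The symmetric case $w_k = w_{k+1}$ is analogous. Iterating yields (b).

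Next I would enforce the $\delta$-separation in (c). Suppose $|P(v_k) - Q(w_{k+1})| \leq \delta$ for some middle $k$. Then the tuple $(v_k, w_{k+1})$ is itself a valid match, so the two offending tuples can be merged into it. I would case-split on which of the four coordinates $v_k, w_k, v_{k+1}, w_{k+1}$ are signature vertices and show that merging preserves every signature vertex: either the signature vertices land on the merged tuple's coordinates directly, or their representation can be shifted to a neighbouring tuple using the triangle inequality, or the assumed closeness contradicts the minimum edge length of signatures. A symmetric argument handles $|P(v_{k+1}) - Q(w_k)| \leq \delta$.

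Once (b) and (c) hold, (d) follows almost for free. Between consecutive middle tuples $(v'_k, w'_k)$ and $(v'_{k+1}, w'_{k+1})$, no $\delta$-signature vertex of either curve lies in the open interior of $[v'_k, v'_{k+1}]$ or $[w'_k, w'_{k+1}]$, because every signature vertex appears as a coordinate of some tuple. Hence $P[v'_k, v'_{k+1}]$ lies within a single $\delta$-signature edge of $P$ and is $2\delta$-monotone by the signature definition; similarly for $Q$. The common direction is forced by combining $|P(v'_k) - Q(w'_k)| \leq \delta$, $|P(v'_{k+1}) - Q(w'_{k+1})| \leq \delta$, and the $\delta$-separation in (c), which gives $|P(v'_{k+1}) - P(v'_k)| > \delta$ and thereby fixes the sign of the monotonicity. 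The main obstacle I anticipate is the bookkeeping in Steps 1 and 2: showing that every merge preserves the full set of signature vertex positions and the crossing-free property of a coupled visiting order, which reduces to a careful case analysis on which coordinate of each tuple is a signature vertex of which curve.
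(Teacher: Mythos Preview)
Your plan for (b) and (c) is essentially the paper's approach: merge or drop a tuple when an equality or a $\delta$-closeness occurs, using the minimum-edge-length property of signatures to rule out the bad subcases. (One small correction: when $v_k=v_{k+1}$ for a middle $k$, the case ``neither coordinate is a $P$-signature vertex'' cannot actually arise, since then both $w_k,w_{k+1}$ would be consecutive $Q$-signature vertices, giving $|Q(w_k)-Q(w_{k+1})|>2\delta$ while the common value $P(v_k)=P(v_{k+1})$ forces $|Q(w_k)-Q(w_{k+1})|\le 2\delta$.)

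The genuine gap is your argument for (d). First, the claimed inequality $|P(v'_{k+1})-P(v'_k)|>\delta$ does not follow from the matching bounds together with (c): with $\delta=1$, the values $P(v'_k)=0$, $P(v'_{k+1})=0.5$, $Q(w'_k)=-0.6$, $Q(w'_{k+1})=1.1$ satisfy both $\delta$-closeness bounds and both strict $\delta$-separation bounds, yet $|P(v'_{k+1})-P(v'_k)|=0.5$. Second, and more fundamentally, the sign of $P(v'_{k+1})-P(v'_k)$ does not determine the $2\delta$-monotonicity direction of the \emph{whole} subcurve on the other side. With $\delta=1$ again, let $v'_k,v'_{k+1}$ be consecutive $P$-signature vertices with $P(v'_k)=0$, $P(v'_{k+1})=3$ and $P$ monotone in between, while $w'_k,w'_{k+1}$ lie strictly inside a single decreasing $Q$-signature edge with $Q$ tracing the values $1\to 3\to 0.5\to 2.5$ on $[w'_k,w'_{k+1}]$. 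All of (b) and (c) hold for these two tuples, $P[v'_k,v'_{k+1}]$ is $2\delta$-increasing but not $2\delta$-decreasing, and $Q[w'_k,w'_{k+1}]$ is $2\delta$-decreasing but not $2\delta$-increasing (the drop $3\to 0.5$ exceeds $2\delta$). So (d) fails even though (b) and (c) hold.

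The paper handles exactly this situation not by merging but by \emph{relocating} the non-signature coordinate: when, say, $P(v_k)$ is a signature minimum and $Q[w_k,w_{k+1}]$ is not $2\delta$-increasing, it replaces $w_k$ by $\arg\min Q[w_k,w_{k+1}]$ and then uses the $2\delta$-monotonicity and minimum-edge-length properties of signatures to verify that the new pair is still within $\delta$ and that the shortened $Q$-segment now has range at most $2\delta$ (hence is $2\delta$-increasing). This reassignment step is the missing idea in your plan.
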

\begin{proof}
	If $w_2=w_3$, there exist values $w'\leq w_2$ and $v'\leq v_2$  such that $P[1, v_3]\subset B(Q(w'), \delta)$, $Q[w', w_3]\subset B(P(v_3), \delta)$ and ${d_F(P[1, v'], Q[1, w'])\leq\delta}$. Hence, $d_F(P[1, v_3], Q[1, w_3])\leq \delta$. The same holds in the case $v_2=v_3$, resp. for $v_{t-2}=v_{t-1}$ and $w_{t-2}=w_{t-3}$.
    If $v_k=v_{k+1}$ for a $k=2, \dotso, t-2$, then $Q(w_k)$ or $Q(w_{k+1})$ is no $\delta$-signature vertex. 
    Hence, we can omit $(v_k, w_k)$ or $(v_{k+1}, w_{k+1})$ and keep a coupled $\delta$-visiting order with Property~(b). If $|P(v_k)-Q(w_{k+1})|\leq \delta$ and $P(v_k)$ is a $\delta$-signature vertex, then $P(v_{k+1})$ and $Q(w_{k})$ are no $\delta$-signature vertices. In this case, we set $w_k=w_{k+1}$ and remove the tuple $(v_{k+1}, w_{k+1})$. Similarly, if $|P(v_{k+1})-Q(w_{k})|\leq \delta$.

    Consider the case that $Q(w_k)$ is a $\delta$-signature vertex, $Q[w_k, w_{k+1}]$ is $2\delta$-monotone increasing and $P[v_k, v_{k+1}]$ is not for a $k=1, \dotso, t$. Then, $P(v_k)$ cannot be $\delta$-signature vertex. Set $v'_k=\arg\min(P[v_k, v_{k+1}])$. Then, $P[v'_k, v_{k+1}]\subseteq[P(v'_k), P(v'_k)+2\delta]$ since $P[v_k, v_{k+1}]$ is $2\delta$-monotone decreasing. Hence, $P[v'_k, v_{k+1}]$ is also $2\delta$-monotone increasing. If $Q(w_{k+1})$ is a $\delta$-signature vertex, then $|P(v'_k)-Q(w_k)|\leq \delta$ since $Q(w_k)\leq Q(w_{k+1})-2\delta$ and $|Q(w_{k+1})-P(v_{k+1})|\leq \delta$. Otherwise, $P(v_{k+1})$ is a $\delta$-signature vertex and $v_{k+1}=v'_k$. Since $|Q(w_{k+1})-P(v_{k+1})|\leq \delta$, $|Q(w_{k})-P(v_{k})|\leq \delta$ and $Q(w_{k})$, $P(v_{k+1})$ are local minima, it follows that $|P(v'_k)-Q(w_k)|\leq \delta$. Hence, we can reassign $v_k=v'_k$. If $k=2$, we reassign $v_3$ instead of $v_2$ in a similar way. This does not interfere with the other reassignments by the monotonicity conditions of $\delta$-signatures.
    After the reassignment $((v_1, w_1), \dotso, (v_t, w_t))$ is a coupled\linebreak{$\delta$-visiting} order with Properties (a)-(d). 
\end{proof}
\begin{figure}
    \centering
    \includegraphics{Figures/Counterexample_visiting_order.pdf}
    \caption{The red disks are the $\delta$-signature vertices. 
    The indices $1\leq 2\leq 5\leq 6\leq 7$ of $Q$ are a $\delta$-visiting order of the $\delta$-signature vertices of $P$ on $Q$, resp. $1\leq 2\leq 5\leq 6\leq 7\leq 8\leq 9$ of the $\delta$-signature vertices of $Q$ on $P$. Those are the only existing $\delta$-visiting orders for the $\delta$-signatures.
    Therefore, there does not exist a coupled $\delta$-visiting order of $P$ and $Q$, since $(3, 5)$ and $(5, 3)$ cross.}
    \label{fig:Counterexample visiting order}
\end{figure}
\begin{lemma}[Key Lemma] \label{t:coupled visiting order}
    Let $P, Q$ be two time series which are endowed with pseudo-start and pseudo-end vertices. Then, $d_F(P, Q)\leq \delta$ if and only if there exist a coupled $\delta$-visiting order $((v_1, w_1), \dotso, (v_t, w_t))$ of $P$ and $Q$ such that $d_F(P[1, v_2], Q[1, w_2])\leq \delta$ and $d_F(P[v_{t-1}, n], Q[w_{t-1}, m])\leq \delta$.
\end{lemma}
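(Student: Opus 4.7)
The plan is to treat the equivalence in two directions: the backward (``if'') direction stitches local Fréchet matchings along the pieces of the visiting order using Lemma~\ref{l:visiting_order_wproperties} and Lemma~\ref{l:VarLemma37}, while the forward (``only if'') direction extracts the two constituent $\delta$-visiting orders consistently from a fixed Fréchet matching of width at most $\delta$ and then checks that they are crossing-free.

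For the ``if'' direction I first invoke Lemma~\ref{l:visiting_order_wproperties} to replace the given coupled $\delta$-visiting order with one that additionally satisfies its properties~(a)--(d): strict monotonicity of the middle indices on both curves, the distance lower bounds $|P(v_k)-Q(w_{k+1})|>\delta$ and $|P(v_{k+1})-Q(w_k)|>\delta$, and the property that each intermediate pair $(P[v_k,v_{k+1}], Q[w_k,w_{k+1}])$ is $2\delta$-monotone in the same direction. The prefix piece $(P[1,v_2],Q[1,w_2])$ and the suffix piece $(P[v_{t-1},n],Q[w_{t-1},m])$ are handled directly by the assumed Fréchet bounds. For each intermediate piece I then check the five hypotheses of Lemma~\ref{l:VarLemma37}: the $2\delta$-monotonicity is property~(d); the endpoint closeness $|P(v_k)-Q(w_k)|\leq \delta$ is the definition of a $\delta$-visiting order; the range containments $P[v_k,v_{k+1}]\subseteq B(Q[w_k,w_{k+1}],\delta)$ and its symmetric version follow from combining the parallel $2\delta$-monotonicity with the two endpoint bounds; and the global extremum conditions~(iv),~(v) hold because, by construction of the coupled visiting order, at least one of $P(v_k), Q(w_k)$ is a $\delta$-signature vertex, hence a local extremum of the correct type inside a $2\delta$-monotone piece. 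An iterated application of Observation~\ref{o:concatanation} then assembles the pieces into a global traversal of width at most $\delta$.

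For the ``only if'' direction I fix $h_P\in \mathcal{F}_P, h_Q\in \mathcal{F}_Q$ realizing $d_F(P,Q)\leq \delta$. Following the Driemel--Krivo{\v s}ija--Sohler existence result, for each $\delta$-signature vertex $Q(j_k)$ the matched point $P(h_P(h_Q^{-1}(j_k)))$ lies within $\delta$ of $Q(j_k)$, and the signature/local-extremum property of $Q(j_k)$ lets me select a vertex index $v_k$ of $P$ that still satisfies $|P(v_k)-Q(j_k)|\leq \delta$ and sits close to $h_P(h_Q^{-1}(j_k))$. Monotonicity of $h_P\circ h_Q^{-1}$ yields $v_1\leq\cdots\leq v_{s_Q}$, and the symmetric construction yields $w_1\leq\cdots\leq w_{s_P}$. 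Crossing-freeness follows from monotonicity of the matching: if $i_k<v_l$ and $j_l<w_k$, the matching places the $Q$-parameter corresponding to $P(i_k)$ no later than that of $P(v_l)$, which forces $w_k\leq j_l$ and contradicts $j_l<w_k$. The boundary Fréchet bounds $d_F(P[1,v_2],Q[1,w_2])\leq \delta$ and $d_F(P[v_{t-1},n],Q[w_{t-1},m])\leq \delta$ are then obtained by restricting $h_P$ and $h_Q$ to the corresponding prefix and suffix subintervals, with the pseudo-start and pseudo-end vertices guaranteeing that $v_2$ and $w_2$ are well-defined vertex indices.

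The principal obstacle I anticipate is the verification of hypothesis~(iii) of Lemma~\ref{l:VarLemma37} on each intermediate piece: property~(d) alone gives only $2\delta$-monotonicity, and the endpoint closeness pins down only two values, so concluding $P[v_k,v_{k+1}]\subseteq B(Q[w_k,w_{k+1}],\delta)$ requires a careful case distinction on the direction of monotonicity, combined with the lower bound $|P(v_k)-Q(w_{k+1})|>\delta$ used to rule out $P$ overshooting the $\delta$-neighbourhood of $Q$. A secondary subtlety in the ``only if'' direction is the discrete choice of $v_k$: since $h_P(h_Q^{-1}(j_k))$ is generally not a vertex index of $P$, the existence of a nearby vertex index at distance at most $\delta$ from $Q(j_k)$ is what genuinely uses $Q(j_k)$ being a local extremum, and the choice must be made consistently enough that the two independently-constructed visiting orders remain crossing-free.
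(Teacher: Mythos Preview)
Your plan matches the paper's proof: the same two-direction structure, and in the ``if'' direction the same pipeline of Lemma~\ref{l:visiting_order_wproperties}, Lemma~\ref{l:VarLemma37} piece by piece, and Observation~\ref{o:concatanation}. Two places where the paper's argument goes beyond your sketch deserve comment.

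For hypothesis~(iii) of Lemma~\ref{l:VarLemma37} you correctly flag the difficulty, but the ingredients you list---parallel $2\delta$-monotonicity, the two endpoint bounds, and property~(c) of Lemma~\ref{l:visiting_order_wproperties}---do not suffice by themselves to bound $P[v_k,v_{k+1}]$ from below by $Q(w_k)-\delta$. The paper's proof additionally cases on which $\delta$-signature \emph{edge of $P$} supports $P(v_k)$ and $P(v_{k+1})$; in the sub-case where that edge is increasing, the needed lower bound comes from comparing $Q(w_k)$ with the preceding $\delta$-signature vertex $P(v^*)$ of $P$ and invoking the minimum-edge-length property of the $P$-signature at $v^*$. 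Condition~(v) is dispatched inside the same case split rather than by the symmetric ``one endpoint is a signature vertex'' argument you sketch.

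In the ``only if'' direction your organisation (build the two visiting orders separately, then argue crossing-freeness) can be made to work, but the step you pass over---``monotonicity of $h_P\circ h_Q^{-1}$ yields $v_1\le\cdots\le v_{s_Q}$''---is precisely where rounding can break order: if the matched real positions $\hat v_k<\hat v_{k+1}$ lie on the same edge of $P$ and are rounded in opposite directions, the inequality flips. The paper sidesteps this by first interleaving all signature vertices of both curves into a single monotone sequence of real pairs and only then rounding the non-signature coordinate; the crucial observation is that when two consecutive real pairs share a $P$-edge, the minimum-edge-length property of the $Q$-signature forces the rounding to be $\lfloor\cdot\rfloor$ followed by $\lceil\cdot\rceil$, so order is preserved. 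You will need this same edge-length argument regardless of whether you merge first or check crossing-freeness afterwards.
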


\Cref{fig:Counterexample visiting order} shows that it is not sufficient to use the notion of a $\delta$-visiting order alone. In this example, it holds that $d_F(P,Q)>\delta$ and there exist (decoupled) $\delta$-visiting orders.

\begin{proof}
    We start with showing one direction of the equivalence statement. So,
    assume ${d_F(P,Q)\leq \delta}$. 
    Then, there exist values $v'_1\leq v'_2\leq\cdots \leq v'_t\in \mathbb{R}$ and $w'_1\leq w'_2\leq\cdots \leq w'_t\in \mathbb{R}$ such that 
    \begin{itemize}
        \item $d_F(P[v'_1, v'_2], Q[w'_1, w'_2])\leq \delta$ and $d_F(P[v'_{t-1}, v'_t], Q[w'_{t-1}, w'_t])\leq \delta$,
        \item $|P(v'_k)-Q(w'_k)|\leq \delta$,
        \item $P(v'_k)$ or $Q(w'_k)$ is a $\delta$-signature vertex for all $k=1, \dotso, t$, and
        \item all $\delta$-signature vertices of $P$ and $Q$ are contained in that sequence.
    \end{itemize}
    Set $v_k=v'_k$ and $w_k=w'_k$ for $k=1, 2, t-1, t$. For $k=3, \dotso, t-2$, consider the case that $Q(w'_k)$ is a $\delta$-signature vertex. Refer to the left of \cref{fig:Beweis_Main_Theo} for a visualization. Then, set $w_k=w'_k$. Let $Q(w'_a)$ be the $\delta$-signature vertex before $Q(w'_k)$ and $Q(w'_b)$ the one afterwards. If $w'_k$ is a local maximum (resp. minimum),
    then by the range property of a $\delta$-signature for all  $v'_a\leq v\leq v'_b$ holds that $P(v)\leq Q(w'_k)+\delta$ (resp. $P(v)\geq Q(w'_k)-\delta$).
    Therefore, one of the endpoints of the edge containing $P(v'_k)$ must lie in $[Q(w'_k)-\delta, Q(w'_k)+\delta]$ and we set $v_k$ to that point. A symmetric construction works for the case that $P(v'_k)$ is a $\delta$-signature vertex. 
    
     We need to show that $v_1\leq v_2\leq \cdots \leq v_t$ and $w_1\leq w_2\leq\cdots \leq w_t$. 
    Since we round each point $v'_k$ to an endpoint of the edge containing $v'_k$, the only time where it might get violated is if $v'_k$ and $v'_{k+1}$ lie on the same edge and $Q(w'_k)$ and $Q(w'_{k+1})$ are $\delta$-signature vertices (or with the role of $P$ and $Q$ changed). However, if $Q(w'_k)<Q(w'_{k+1})$ (resp. $Q(w'_k)>Q(w'_{k+1})$), then $P(v'_k)<P(v'_{k+1})$ (resp. $P(v'_k)>P(v'_{k+1})$) by the edge-length property of a signature. Hence, $v_k=\lfloor v'_k\rfloor$ and $v_{k+1}=\lceil v'_{k+1}\rceil$. So, $(v_1, w_1),\dotso, (v_t, w_t)$ is a coupled $\delta$-visiting order with the desired properties.
    
    To show the other direction, let $((v_1, w_1), \dotso, (v_t, w_t))$ be a coupled $\delta$-visiting order of~$P$ and $Q$ with $d_F(P[1, v_2], Q[1, w_2])\leq \delta$,  $d_F(P[v_{t-1}, n], Q[w_{t-1}, m])\leq~\delta$ and the properties of \cref{l:visiting_order_wproperties}.
    We show that \cref{l:VarLemma37} can be applied to $P[v_k, v_{k+1}]$ and $Q[w_k, w_{k+1}]$ for all $k=2, \dotso, t-2$. Then, it follows that $d_F(P[v_k, v_{k+1}],Q[w_k, w_{k+1}])\leq \delta$ and by \cref{o:concatanation}, it holds that $d_F(P, Q)\leq \delta$.

    Condition~(i), (ii) and (iv) of \cref{l:VarLemma37} are true by construction. So, it remains to prove that Conditions~(iii) and (v) are true. We show those properties for the case that $Q(w_k)$ is a $\delta$-signature vertex which is a local minimum of $Q$. The other cases follow analogously.
    \begin{enumerate}
        \item If $P(v_k), P(v_{k+1})$ are supported by a {$2\delta$-monotone} increasing $\delta$-signature edge, then $P(v_{k+1})$ or $Q(w_{k+1})$ is a $\delta$-signature vertex that is a local maximum, i.e., Condition~(v) holds.
        Let $P(v^*)$ denote the $\delta$-signature vertex before $P(v_{k+1})$ on $P$.
        \begin{enumerate}
            \item If $|P(v^*)-Q(w_k)|\leq \delta$, then $P[v_k, v_{k+1}]\geq Q(w_k)-\delta$. 
            \item Otherwise, $P(v^*)+\delta<Q(w_k)\leq Q[w_{k-1}, w_k]$. Refer to the right of \cref{fig:Beweis_Main_Theo}. Therefore, $Q(w_{k-1})$ is a $\delta$-signature and $P(v_{k-1})$ not. By the minimum edge length property and the $2\delta$-monotonicity, it follows that $P(s)\geq P(v_{k-1})-2\delta\geq (Q(w_{k-1})-\delta)-2\delta>Q(w_k)-\delta$ for all $v_{k-1}\leq s\leq v_{k+1}$.
        \end{enumerate}
        \item Otherwise, $P(v_k), P(v_{k+1})$ are supported by a {$2\delta$-monotone} decreasing $\delta$-signature edge. Then, $Q(w_{k+1})$ must be a $\delta$-signature vertex of $Q$ that is a global maximum on $Q[w_k, w_{k+1}]$ by Property (c) of \cref{l:visiting_order_wproperties}.
        Hence, Condition~(v) holds. Further, for all $v_k\leq v'\leq v_{k+1}$, it holds that ${P(v')\geq P(v_{k+1})-2\delta\geq Q(w_{k+1})-3\delta\geq Q(w_k)-\delta}$. 
    \end{enumerate}
    If $Q(w_{k+1})$ is a local maximum $\delta$-signature vertex, then $P[v_k, v_{k+1}]\leq Q(w_{k+1})+\delta$, resp. for $P$ and $Q$ switched. Therefore, Condition~(iii) is true.
    We have shown that all conditions of \cref{l:VarLemma37} are true for $P[v_k, v_{k+1}]$ and $Q[w_k, w_{k+1}]$, which completes the proof.
\end{proof}

\begin{figure}
    \centering
    \includegraphics[page=1]{Figures/Beweis_Main_Theo.pdf}
    \caption{Visualization of parts of the proof of \cref{t:coupled visiting order}.}
    \label{fig:Beweis_Main_Theo}
\end{figure}

\Cref{l:visiting_order_wproperties} and \cref{t:coupled visiting order} yield the following: 
\begin{observation}\label{o:complexitySignature}
    If $d_F(P, Q)\leq \delta$, then the number of $\delta$-signature vertices of $P$ is at most two more than the complexity of $Q$.
\end{observation}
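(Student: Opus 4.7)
The plan is to obtain a coupled $\delta$-visiting order from the hypothesis $d_F(P,Q)\leq\delta$ and then bound the number of tuples it contains by using the strict monotonicity that \cref{l:visiting_order_wproperties} enforces in the middle.

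First, I would invoke the Key Lemma (\cref{t:coupled visiting order}) to get, from $d_F(P,Q)\leq\delta$, a coupled $\delta$-visiting order of $P$ and $Q$ together with the Fr\'echet conditions on the two end pieces. I would then feed this order into \cref{l:visiting_order_wproperties} to obtain a coupled $\delta$-visiting order $((v_1,w_1),\dotso,(v_{t'},w_{t'}))$ that satisfies properties~(a)--(d); the decisive property is~(b), which guarantees $v_2<v_3<\cdots<v_{t'-1}$ and $w_2<w_3<\cdots<w_{t'-1}$.

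Next I would count in two steps. The values $w_2,\dotso,w_{t'-1}$ are $t'-2$ pairwise distinct integer indices of vertices of~$Q$, of which there are $m$, so $t'-2\leq m$ and hence $t'\leq m+2$. Every $\delta$-signature vertex $P(i_l)$ must appear as the first coordinate of some tuple of the coupled visiting order by the definition of the order; since the $v$-values are non-decreasing and strictly increasing on $\{2,\dotso,t'-1\}$, two distinct signature-vertex indices of $P$ necessarily occupy two distinct positions $k$. Combining these observations, the number $s_P$ of $\delta$-signature vertices of $P$ is at most $t'\leq m+2$, which is exactly the claimed inequality.

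The main care point is verifying that no $\delta$-signature vertex of~$P$ is lost when \cref{l:visiting_order_wproperties} simplifies the visiting order: whenever the lemma drops a tuple at a coincidence such as $v_k=v_{k+1}$, the retained tuple still shares the same $v$-value and therefore continues to carry the signature vertex of~$P$. A secondary, purely bookkeeping issue is the pseudo-start and pseudo-end vertices introduced for $P$ and $Q$; these sit at the extremities of the coupled visiting order and so do not inflate the middle $w$-count that drives the bound $t'\leq m+2$.
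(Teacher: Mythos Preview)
Your proposal is correct and follows exactly the route the paper indicates: the observation is stated as an immediate consequence of \cref{l:visiting_order_wproperties} and \cref{t:coupled visiting order}, and your counting argument via the strict monotonicity of the $w$-coordinates in property~(b) is the natural way to extract the bound $s_P\leq t'\leq m+2$. One small simplification: you need not separately verify that no $\delta$-signature vertex of~$P$ is lost during the clean-up in \cref{l:visiting_order_wproperties}, since the lemma's output is explicitly a coupled $\delta$-visiting order and therefore, by definition, still carries every $\delta$-signature index of~$P$.
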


\Cref{t:coupled visiting order} implies the following corollary, which is Theorem~3.7 by Driemel, Krivošija and Sohler~\cite{DKS16}. We can even strengthen their statement to say $r_a=[P'(a)-3\delta, P'(a)+3\delta]$ instead of $[P'(a)-4\delta, P'(a)+4\delta]$ for $a=1$ or $a=s_P$. 
\Cref{fig:beginning-3delta possible} shows why we define $r_1=[P'(1)-3\delta, P'(1)+3\delta]$. 

\begin{corollary}[Theorem~3.7 in \cite{DKS16}]
    Let $P'=\langle P(i_1), \dotso, P(i_{s_P})\rangle$ be a $\delta$-signature of~$P$. Let $r_1=B(P(1),3\delta)$, $r_{s_P}=B(P(i_{s_P}),3\delta)$ and $r_k=B(P(i_k),\delta)$ for $k=2, \dotso, s_P-1$. Let $Q$ and $\widehat Q$ be time series such that $d_F(P,Q)\leq \delta$ and $\widehat{Q}$ is $Q$ after removing some vertex $Q(j')$ with $Q(j')\notin\bigcup_{1\leq k\leq l}r_k$. Then, it holds that $d_F(P, \widehat{Q})\leq \delta$.
\end{corollary}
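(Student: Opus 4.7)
The plan is to apply the Key Lemma (Lemma~\ref{t:coupled visiting order}) in both directions. From the hypothesis $d_F(P, Q) \leq \delta$ I extract a coupled $\delta$-visiting order $((v_1, w_1), \dotso, (v_t, w_t))$ of $P$ and $Q$ that additionally satisfies properties (a)--(d) of Lemma~\ref{l:visiting_order_wproperties}. The goal is to show that the index $j'$ does not appear among the $w_k$; once this is established, the sequence obtained by setting $w'_k := w_k$ for $w_k < j'$ and $w'_k := w_k - 1$ for $w_k > j'$ is a coupled $\delta$-visiting order of $P$ and $\widehat Q$, and the Key Lemma applied in reverse then yields $d_F(P, \widehat Q) \leq \delta$.

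The central claim, $w_k \neq j'$ for every $k$, splits into two cases. If $P(v_k) = P(i_l)$ is a $\delta$-signature vertex of $P$, the inequality $|P(v_k) - Q(w_k)| \leq \delta$ places $Q(w_k)$ inside $B(P(i_l), \delta) \subseteq r_l$, so $Q(w_k) \neq Q(j')$. If instead $Q(w_k)$ is a $\delta$-signature vertex of $Q$ while $P(v_k)$ is not a signature vertex of $P$, I assume WLOG that $Q(w_k)$ is a local maximum, so $v_k$ lies strictly inside some signature edge $\overline{P(i_l), P(i_{l+1})}$. Property~(c) of Lemma~\ref{l:visiting_order_wproperties} combined with the range property of this signature edge forces $P(i_l) < Q(w_k) - \delta$ and $P(i_{l+1}) > Q(w_k) + \delta$. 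The strict ordering in property~(b), together with the fact that every $\delta$-signature vertex of $P$ must appear as some $v_a$, pins down $v_{k+1} = i_{l+1}$, so $|P(i_{l+1}) - Q(w_{k+1})| \leq \delta$ gives $Q(w_{k+1}) > Q(w_k)$. But $w_{k+1} > j'$ lies on the $Q$-signature edge starting at $Q(w_k)$, and the range property of interior $Q$-signature edges forces $Q(w_{k+1}) \leq Q(w_k)$---a contradiction. Exactly the configurations that escape this contradiction are those in which the $Q$-signature edge after $Q(w_k)$ is the last one (and symmetrically, via $(v_{k-1}, w_{k-1})$, the first one); in these boundary configurations the $\delta$-slack of the range property combined with the enlarged $3\delta$ radii of $r_1$ and $r_{s_P}$ already places $Q(w_k)$ into $r_1 \cup r_{s_P}$, again ruling out $Q(w_k) = Q(j')$.

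Once $j' \notin \{w_1, \dotso, w_t\}$ is established, the relabeled sequence inherits all distance and $2\delta$-monotone conditions of Lemma~\ref{l:visiting_order_wproperties} verbatim since $P$ is unchanged and the vertices of $\widehat Q$ outside $Q(j')$ are exactly those of $Q$; moreover, every $\delta$-signature vertex of $\widehat Q$ lies in $\{Q(w_1), \dotso, Q(w_t)\} \setminus \{Q(j')\}$ and thus is still represented. The boundary conditions of the Key Lemma transfer because $j'$ is not among $w_1, w_2, w_{t-1}, w_t$, and invoking Lemma~\ref{t:coupled visiting order} in reverse concludes the proof. The main obstacle is the case analysis sketched above: showing that every $Q$-signature vertex appearing in the coupled visiting order must lie in some $r_i$ requires invoking properties~(b), (c), and~(d) of Lemma~\ref{l:visiting_order_wproperties} simultaneously with the range property of signature edges on both curves, and the enlarged $3\delta$ radii for $r_1$ and $r_{s_P}$ are precisely what is needed to absorb the boundary configurations arising from the $\delta$-slack of the first and last signature edges of $Q$.
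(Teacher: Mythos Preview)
Your central claim---that $j'$ cannot equal any $w_k$ in the (refined) coupled $\delta$-visiting order---is false. Take $\delta=1$, $P=\langle 20,0,4,2,10,0\rangle$, $Q=\langle 20,0,5,2,10,0\rangle$. One checks $d_F(P,Q)\le 1$ by matching vertex to vertex. The $\delta$-signature of $P$ is $\langle 20,0,10,0\rangle$ (at indices $1,2,5,6$), so $\bigcup_i r_i=[-3,3]\cup[9,11]\cup[17,23]$, while the $\delta$-signature of $Q$ is all of $\langle 20,0,5,2,10,0\rangle$. The vertex $Q(3)=5$ is an interior $\delta$-signature vertex of $Q$ lying outside every $r_i$, so with $j'=3$ the hypothesis of the corollary is met; yet $j'$ must occur as some $w_k$ in any coupled $\delta$-visiting order. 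Concretely, $((1,1),(2,2),(3,3),(4,4),(5,5),(6,6))$ is a coupled visiting order satisfying all properties (a)--(d) of Lemma~\ref{l:visiting_order_wproperties}, and $w_3=3=j'$.

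The specific break in your Case-2 argument is the step ``pins down $v_{k+1}=i_{l+1}$''. Strict ordering together with the requirement that every $P$-signature index appears only gives $v_{k+1}\le i_{l+1}$; when $v_{k+1}<i_{l+1}$ the tuple $(v_{k+1},w_{k+1})$ has $Q(w_{k+1})$ equal to the next $Q$-signature vertex (a local minimum), and no contradiction follows from the range property. In the example above, $v_3=3$ lies strictly inside the interior $P$-signature edge $[i_2,i_3]=[2,5]$, yet $v_4=4<5=i_3$.

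The paper's proof takes a different route precisely because this case is real. It does not try to rule out $j'=w_l$; instead it first confines $j'$ to the interval $w_2<j'<w_{t-1}$ via $Q[1,w_2]\subset B(P(1),3\delta)$, and then, when $Q(j')$ \emph{is} a $\delta$-signature vertex of $Q$, it analyses how the signature of $\widehat Q$ changes (either two adjacent signature vertices of $Q$ drop out, or $Q(j')$ is replaced by a nearby $Q(j'')$ with $|P(v_l)-Q(j'')|\le\delta$) and edits the coupled visiting order accordingly before invoking the Key Lemma in the reverse direction.
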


\begin{proof}
    For simplicity of exposition, we duplicate the vertex $Q(j'-1)$ in $\widehat{Q}$. This way, the vertex numbering of the proceeding vertices is not affected by the removal of $Q(j')$.
    
    By \cref{t:coupled visiting order}, there exists a coupled $\delta$-visiting order 
    $V=((v_1, w_1), \dotso, (v_t, w_t))$ of $P$ and~$Q$ with ${d_F(P[1, v_2], Q[1, w_2])\leq \delta}$ and ${d_F(P[v_{t-1}, n], Q[w_{t-1}, m])\leq \delta}$. 

    Since $Q(j')$ does not lie inside the signature range $r_1$ of $P$ and $Q[1, w_2]\subset B(P[1, v_2], \delta)\subset B(P(1), 3\delta)$, it follows that $j'>w_2$.
    It similarly follows that $j'<w_{t-1}$. 
    
    If the $\delta$-signature $\widehat Q'$ of $\widehat Q$ is the same as the $\delta$-signature $Q'$ of $Q$, then $\widehat V=V$ is a coupled $\delta$-visiting order of $P$ and $\widehat Q$.  
    In this case, \cref{t:coupled visiting order} implies $d_F(P, \widehat Q)\leq \delta$.
    Otherwise, we show that there still exists a coupled $\delta$-visiting order of $P$ and $Q'$ and draw the same conclusion.
    
    Since the signature changed by the removal of $j'$, $Q(j')$ must be a $\delta$-signature vertex of $Q$ and as such there exists an $l$ such that $j'=w_l$. Now, let $Q(w_a)$ be the $\delta$-signature vertex before $Q(w_l)$ and $Q(w_b)$ the one after $Q(w_l)$ and let $j''\in [w_a, w_b]$ be such that $\widehat Q[w_a, w_b]\subset \overline{Q(w_a)Q(j'')}\cup \overline{Q(j'') Q(w_b)}$. We distinguish two cases: since the signature is an alternating sequence of minima and maxima, either (a) a neighbouring signature vertex must disappear from the signature or (b) a new signature vertex $j''$ must appear in lieu of $j'$.
    \begin{itemize}
        \item[a)] If $|Q(w_a)-Q(j'')|\leq 2\delta$ or $|Q(w_b)-Q(j'')|\leq 2\delta$, then one of $Q(w_a)$ or $Q(w_b)$ disappears together with $Q(j')$ from the signature.
        Hence, we can delete $(v_l, w_l)$ and $(v_a, w_a)$ (resp. $(v_b, w_b)$) from $V$ and obtain a coupled $\delta$-visiting order $\widehat V$ of $P$ and $\widehat Q$.
        

       
        \item [b)] Otherwise, it holds that $|Q(w_a)-Q(j'')|> 2\delta$ and $|Q(w_b)-Q(j'')|> 2\delta$ and that $\widehat Q'$ can be obtained from $Q'$ by replacing $Q(j')$ with $Q(j'')$. Assume without loss of generality\footnote{If one of these conditions is not fulfilled, we can mirror both curves at zero and/or reverse the direction of both curves.} that $Q(j'')$ is a maximum  and that $P(v_{l-1}) \geq P(v_{l+1})$. 
        We claim that $|P(v_l)-Q(j'')| \leq \delta$, where $P(v_l)$ by definition was the vertex matched to $Q(j')$ in the coupled visiting order. 
        
        The proof proceeds in two steps. First, note that $P(v_l)$ cannot be a $\delta$-signature vertex of $P$, otherwise $Q(j')$ would lie inside a $\delta$-signature range of $P$, which is ruled out by the conditions on $j'$ in the lemma statement. Hence, $P[v_{l-1}, v_{l+1}]$ is $2\delta$-monotone.
        We conclude that 
        \begin{align}
            P(v_l)\leq P(v_{l-1})+2\delta.\label{eq_1}
        \end{align}
        Indeed, if $P(v_{l})> P(v_{l-1})$ then this follows from the $2\delta$-monotonicity of $P[v_{l-1}, v_{l+1}]$ and if $P(v_{l})\leq P(v_{l-1})$ it follows trivially.
        
        Secondly, we argue that $P(v_{l-1})$ is not a $\delta$-signature vertex of $P$. Indeed, assume for the sake of contradiction that $P(v_{l-1})$ is a $\delta$-signature vertex. Then, it must be the maximum of $P[v_{l-1}, v_{l+1}]$ and in particular, $P(v_{l-1})\geq P(v_l)\geq Q(j')-\delta$. Further, as $Q(j')$ does not lie inside a signature range by the conditions in the lemma statement, it must hold that $P(v_{l-1})>Q(j')+\delta$. This leads to a contradiction as $P(v_{l-1})\leq Q(w_{l-1})+\delta\leq Q(j')+\delta$. Hence, $P(v_{l-1})$ cannot be a $\delta$-signature vertex of $P$ and therefore $Q(w_{l-1})$ must be a $\delta$-signature vertex of $Q$ by the definition of a coupled $\delta$-visiting order. It follows that $Q(w_{l-1})$ must be $Q(w_{a})$, the vertex preceding $Q(j'')$ on the signature $\widehat Q'$. Therefore,
        \begin{align}
            Q(w_{l+1})+2 = Q(w_{a})+2 \delta<Q(j'').\label{eq_2}
        \end{align} 
        Putting (\ref{eq_1}) and (\ref{eq_2}) together, we get that 
        \[-\delta\leq (Q(w_{l-1})+2\delta)-(P(v_{l-1})+2\delta) < Q(j'')-P(v_l) \leq Q(j')-P(v_l)\leq \delta.\]
        Hence, $|P(v_l)-Q(j'')|\leq \delta$ and we get a coupled $\delta$-visiting order $\widehat V$ of $P$ and $\widehat Q$ after replacing $(v_l, w_l)$ with $(v_l, j'')$ in $V$. 
    \end{itemize}
    Since $w_2< j'< w_{t-1}$, in any case $\widehat V$ fulfills the properties of \cref{t:coupled visiting order} for $P$ and~$\widehat Q$. Hence, $d_F(P, \widehat Q)\leq \delta$.
\end{proof}

\begin{figure}
    \centering
    \includegraphics[page=2, scale=0.9]{Figures/beginning-3_delta_possible.pdf}
    \caption{$P(1), P(3), P(4), P(7)$ and $P(8)$ are the $\delta$-signature vertices of $P$. If $\widehat Q$ is obtained by removing $Q(2)$ from $Q$, then $d_F(P, \widehat Q)> \delta$. Therefore, we define the (enlarged) signature-range $r_1=[P(1)-3\delta, P(1)+3\delta]$. If $\widehat Q$ is obtained from $Q$ by removing $Q(4)$, then it still holds that $d_F(P, \widehat Q)\leq \delta$.}
    \label{fig:beginning-3delta possible}
\end{figure}

\section{Fréchet Distance Oracle}\label{s:distnaceOracle}
In this section, we present a data structure to store a time series $P$ and to compute the \F between $P$ and a query time series $Q$. Note that the data structure can be built without knowing the complexity of the query.

\subparagraph*{The Data Structure.}
Let $P$ be a time series of complexity $n$. Then, we store the data structure of \cref{t:DS_signatures} and a layered range tree storing the points $(1, P(1)), (2, P(2)), \dotso, (n, P(n))$. 

\subsection{Decision Algorithm}
We begin with the decision variant of the problem.
Our query algorithm is based on \cref{t:coupled visiting order}. In previous works, the so-called free space diagram is used to solve the decision problem of the \Fm.  In contrast to this approach, we focus solely on \emph{super cells}, which arise from the grid superimposed by the $\delta$-signature edges onto the parametric space of the two time series. For example, the orange cell $[i_k, i_{k+1}]\times [j_l, j_{l+1}]$ in \cref{fig:Reachability_Graph} is a super cell. The computations within the super cell corresponding to the first $\delta$-signature edge of each time series, as well as for the last one (yellow cells in \cref{fig:Reachability_Graph}), are handled differently by Step~(B) and~(D). For all other super cells, Step~(C) computes the right exit point and top exit point. These exit points are defined as follows:

\begin{definition}
    Let $P$ and $Q$ be time series and $\langle P(i_1), \dotso, P(i_{s_P})\rangle$ and $\langle Q(j_1), \dotso, Q(i_{s_{Q}})\rangle$ be their extended $\delta$-signatures. Then for $k=1, \dotso, s_{P-1}$, $l=1, \dotso, s_{Q-1}$, we define the \emph{right (resp. top) exit point} $v_{k,l}\leq i_{k+1}$ (resp. {${w_{k,l}\leq j_{l+1}}$}) to be the smallest index such that there exist tuples $(v_1, w_1), \dotso, (v_r, w_r)$ with $v_r, w_r\in \mathbb{N}$ for $r=3, \dotso, t-2$, where
    \begin{enumerate}[(a)]
        \item $(v_r, w_r)=(v_{k,l}, j_l)$ (resp. $(v_r, w_r)=(i_k, w_{k,l})$),
        \item $v_1\leq\cdots \leq v_r$ and $w_1\leq\cdots \leq w_r$,
        \item $i_1, \dotso, i_k\in \{v_1, \dotso, v_r\}$ and $j_1, \dotso, j_l\in \{w_1, \dotso, w_r\}$,
        \item $|P(v_s)-Q(w_s)|\leq \delta$ for $s=1, \dotso, r$, and
        \item $d_F(P[1, v_2], Q[1, w_2])\leq \delta$.
    \end{enumerate}
    If no such index exists, we set $v_{k, l}=\infty$ (resp. $w_{k,l}=\infty$).
\end{definition}

\subparagraph*{The Query Algorithm.}
Let $Q$ denote the query time series of complexity $m$ and $\delta$ the distance parameter. Then, we check whether the $\delta$-signature of $P$ has at most $m$ vertices. If it has more, we stop and return $d_F(P,Q)>\delta$. Otherwise, the algorithm proceeds with the following four steps:
\begin{enumerate}[Step (A)]
    \item Compute the extended $\delta$-signatures $\langle P(i_1), \dotso, P(i_{s_P})\rangle$ of $P$ and $\langle Q(j_1), \dotso, Q(j_{s_Q})\rangle$ of~$Q$ and construct a layered range tree storing the points $(1, Q(1)), \dotso, (m, Q(m))$.
    
    \item Compute the first value $v_{1,2}\leq i_2$ such that $d_F(P[1, v_{1, 2}], Q[1, j_2])\leq \delta$ and the first value $w_{2, 1}\leq j_2$ such that $d_F(P[1, i_2], Q[1, w_{2,1}])\leq \delta$. If no such value exists, we set it to be infinity.

    \item Compute the exit points $v_{k,l}$ and $w_{k,l}$ by iterating over the super cells in a row-by-row order and within each row from left to right. 
    
    \item Compute the latest value $\widetilde v\geq i_{s_{P}-1}$ such that $d_F(P[\widetilde v, n], Q[j_{s_Q-1}, m])\leq \delta$ and the latest value $\widetilde w\geq j_{s_{q}-1}$ such that $d_F(P[i_{s_P-1}, n], Q[\widetilde w, m])\leq \delta$. If no such value exists, we set it to be zero. 
\end{enumerate}
If $v_{s_P-1, s_Q-1}\leq \widetilde v$ or $w_{s_P-1, s_Q-1}\leq \widetilde w$, return $d_F(P, Q)\leq \delta$. Otherwise, return $d_F(P, Q)>\delta$.
\ \\

In \cref{s:StBD} and \ref{s:StC}, we describe how Step~(B)-(D) can be computed. 

\begin{figure}
    \centering
    \includegraphics[page=2]{Figures/Reachability_Graph.pdf}
    \caption{For the orange cell $[i_k, i_{k+1}]\times [j_l, j_{l+1}]$, we find the smallest index $v_{k, l-1}$ such that $(v_{k, l-1}, P(v_{k, l-1}))\in [v_{k, l}, i_{k+1}]\times B(Q(j_{l+1}),\delta)$ and the smallest index $w_{k+1, l}$ such that $(w_{k+1, l}, Q(w_{k+1, l}))\in [j_l, j_{l+1}]\times B(P(i_{k+1}), \delta)$. The thick lines mark the query intervals for the indices of the exit points. To compute $i_2, j_2, \widetilde v$ and $\widetilde w$, \cref{l: minimum first signature} is used.}
    \label{fig:Reachability_Graph}
\end{figure}

\begin{theorem}\label{t:decision}
    Given a time series $P$ of complexity $n$, we can construct a data structure that uses size and preprocessing time in $\mathcal{O}(n\log n)$ and supports the following type of queries. For a distance parameter $\delta>0$ and a time series $Q$ of complexity $m<n$, it can decide whether $d_F(P, Q)\leq \delta$ in time in $\mathcal{O}(m^2 \log n)$. If $s_P$ denotes the complexity of the $\delta$-signature of $P$, respectively $s_Q$ for $Q$, then the running time is also in $\mathcal{O}(s_P s_Q \log n+m\log n)$. 
\end{theorem}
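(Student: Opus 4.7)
The plan is to verify the claimed size, preprocessing time, and query time by analyzing the described algorithm, with correctness anchored in the Key Lemma (\cref{t:coupled visiting order}). Preprocessing is immediate: building the signature data structure of \cref{t:DS_signatures} costs $\mathcal{O}(n\log n)$ time and $\mathcal{O}(n)$ space, and the layered range tree storing $(i,P(i))$ costs $\mathcal{O}(n\log n)$ time and space by \cref{t:successorProblem}, giving the claimed overall bounds.

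For correctness, the key step is to show that the Boolean test on the final exit points $v_{s_P-1,s_Q-1}$ and $w_{s_P-1,s_Q-1}$ is equivalent to the existence of the coupled $\delta$-visiting order guaranteed by \cref{t:coupled visiting order}. I would first argue that if $d_F(P,Q)\leq\delta$, then by the Key Lemma a coupled $\delta$-visiting order exists whose prefix realizes $d_F(P[1,v_2],Q[1,w_2])\leq\delta$ and whose suffix realizes $d_F(P[v_{t-1},n],Q[w_{t-1},m])\leq\delta$; restricting this sequence to its passage through the super cell grid yields an exit point in each relevant cell, and in particular witnesses that $v_{s_P-1,s_Q-1}\leq\widetilde v$ or $w_{s_P-1,s_Q-1}\leq\widetilde w$. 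Conversely, if this inequality holds, then the exit points together with the computed $v_{1,2}, w_{2,1}, \widetilde v, \widetilde w$ assemble into a coupled $\delta$-visiting order satisfying the hypotheses of \cref{t:coupled visiting order}, so $d_F(P,Q)\leq\delta$. A short preliminary check: if the signature of $P$ has more than $m$ vertices, \cref{o:complexitySignature} certifies that $d_F(P,Q)>\delta$ immediately, justifying the early rejection.

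For the running time, Step~(A) costs $\mathcal{O}(s_P\log s_P)$ to extract the signature by \cref{t:DS_signatures}, plus $\mathcal{O}(m\log m)$ to build the layered range tree on~$Q$. Steps~(B) and~(D) handle the first and last super cells and can be performed with $\mathcal{O}((s_P+s_Q)\log n)$ orthogonal range successor queries on the two layered range trees, as discussed in the subsequent section \cref{s:StBD}. Step~(C) visits the $\mathcal{O}(s_P s_Q)$ super cells and per cell performs a constant number of orthogonal range successor queries—one on the tree for $P$ in the range $[v_{k,l-1}, i_{k+1}]\times B(Q(j_{l+1}),\delta)$ for the right exit point, and a symmetric one on the tree for $Q$—each costing $\mathcal{O}(\log n)$ by \cref{t:successorProblem}. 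Summing yields $\mathcal{O}(s_P s_Q\log n+m\log n)$, and since $s_P,s_Q\leq m$ this is at most $\mathcal{O}(m^2\log n)$.

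The main obstacle I expect is the careful bookkeeping in the correctness argument: one must show that the greedy row-by-row propagation of exit points is monotone in the sense needed, i.e.\ that the smallest feasible exit from a cell dominates all later possibilities, so that computing only the leftmost/lowest exit never misses a feasible global visiting order. This amounts to showing a prefix-exchange property for coupled $\delta$-visiting orders, using the crossing-free condition together with the $2\delta$-monotonicity of signature edges (and the strengthening from \cref{l:visiting_order_wproperties}) to swap any feasible witness into one that uses the computed exit points. The remaining verification—that Steps~(B) and~(D) correctly handle the extended signature endpoints via the 3$\delta$ range at the boundary, as highlighted in \cref{fig:beginning-3delta possible}—is technical but follows directly from the definitions.
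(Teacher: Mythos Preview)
Your plan is correct and follows the paper's approach. Two minor corrections: Steps~(B) and~(D) cost $\mathcal{O}(m\log n)$ by \cref{l: minimum first signature}, not $\mathcal{O}((s_P+s_Q)\log n)$ (harmless here, since the $m\log n$ term already appears from building the range tree on $Q$); and the ``prefix-exchange'' obstacle you anticipate dissolves, because exit points are \emph{defined} as the minimal indices satisfying properties~(a)--(e), so any coupled $\delta$-visiting order supplied by the Key Lemma directly witnesses $v_{s_P-1,s_Q-1}\le\widetilde v$ or $w_{s_P-1,s_Q-1}\le\widetilde w$, with the correctness of the per-cell greedy computation handled separately in \cref{l:step C} rather than in this proof.
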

\begin{proof}
    It takes time in $\mathcal{O}(n\log n)$ to compute the layered range tree of same size by \cref{t:successorProblem} and time in $\mathcal{O}(n\log n)$ to construct the data structure of \cref{t:DS_signatures} of size in $\mathcal{O}(n)$. With slight modifications, it can answer also queries as in the beginning and Step~(A) of the query algorithm.
    Using the definition of \cref{l:hierachicalSignature}, this data structure stores the vertices of $P$ exactly once together with some token separators such that the vertices after the $i$-th separator belong to $P_i$. 
    By \cref{l:hierachicalSignature}, we can store a pointer from the $i$-th separator to the interval $[\delta_i, \delta_{i+1})$ such that the $\delta$-signature, for any $\delta\in [\delta_i, \delta_{i+1})$, contains all vertices after this separator. 
    To check whether the $\delta$-signature of $P$ has at most $m$ vertices takes $\mathcal{O}(m)$ time. If it has more than $m$ vertices, the total running time is in $\mathcal{O}(m)$. Otherwise, it holds that $s_P\leq m$. Step~(A) takes $\mathcal{O}(m\log m)$ time by \cref{t:DS_signatures} and \ref{t:successorProblem}.
    To compute $v_{1, 2}$, $w_{2, 1}$, $\widetilde v$ and $\widetilde w$ takes $\mathcal{O}(m \log n)$ time by \cref{l: minimum first signature}. Each iteration of Step~(C) takes $\mathcal{O}(\log n)$ time by \cref{l:step C}. There are $\mathcal{O}(s_Ps_Q)$ steps. Since $s_Q, s_P\leq m$, the total running time is in $\mathcal{O}(m^2 \log n)$.
    
    It remains to prove the correctness. If the $\delta$-signature of $P$ has more than $m$ vertices, $d_F(P,Q)>\delta$ by \cref{o:complexitySignature}. Assume that $v_{s_P-1, s_Q-1}\leq \widetilde v$. We add vertices at $P(v_{1,2})$, $Q(w_{2, 1})$, $P(\widetilde v)$ and $Q(\widetilde w)$ if there were none before. Then, by the definition of left exit points, there exist tuples such that $(v_1, w_1), \dotso, (v_{t-2}, w_{t-2}), (\widetilde v, j_{s_Q-1}), (n,m)$ is a coupled $\delta$-visiting order with $d_F(P[1, v_2], Q[1, w_2])\leq \delta$. Further, by time Step~(D), it holds $d_F(P[\widetilde v, n], Q[j_{s_Q-1}, m])\leq \delta$. Hence, $d_F(P, Q)\leq \delta$ by \cref{t:coupled visiting order}.
    In the case that $d_F(P,Q)\leq \delta$, there exist a coupled $\delta$-visiting order $(v_1, w_1), \dotso, (v_t, w_t)$ of $P$ and $Q$ such that $d_F(P[1, v_2], Q[1, w_2])\leq \delta$ and $d_F(P[v_{t-1}, n], Q[w_{t-1}, m])\leq \delta$ by \cref{t:coupled visiting order}. 
    Hence, $v_{s_P-1, s_Q-1}\leq v_{t-1}\leq \widetilde v$ or $w_{s_P-1, s_Q-1}\leq w_{t-1}\leq \widetilde w$. So, the algorithm returns $d_F(P,Q)\leq \delta$ and is correct.
\end{proof}

\subsubsection{Step~(B) and Step~(D)}\label{s:StBD}
Step~(D) can be done in the same way as Step~(B), after reversing the order of the vertices in both time series. Therefore, we only describe Step~(B) here.
\Cref{Alg:first_edge} computes the value~$w_{2,1}$. With exchanged roles of $P$ and $Q$, it computes~$v_{1,2}$. An example of a matching produced by this algorithm is pictured in \cref{fig:MatchingUntilMinimum}.

\begin{lemma}\label{l: minimum first signature}
    Given the layered range trees for $P$ and $Q$, it is possible to compute $w_{2,1}$ in $\mathcal{O}(\min\{m, n\} (\log n+\log m))$ time.
\end{lemma}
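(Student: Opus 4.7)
The plan is to analyze \cref{Alg:first_edge} and prove correctness via \cref{l:VarLemma37} together with the per-query cost of \cref{t:successorProblem}. First, I exploit the structure of the first extended $\delta$-signature edge. By definition, $P[1, i_2]$ is $2\delta$-monotone; assume without loss of generality that it is $2\delta$-monotone increasing with $P(i_2)$ a maximum on $P[1, i_2]$. Specializing \cref{l:VarLemma37} then yields a convenient sufficient condition: $d_F(P[1, i_2], Q[1, w]) \leq \delta$ holds whenever $|P(1) - Q(1)| \leq \delta$, $|P(i_2) - Q(w)| \leq \delta$, the prefix $Q[1, w]$ is $2\delta$-monotone increasing, and each of $P[1, i_2]$, $Q[1, w]$ lies in the $\delta$-range of the other. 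The task therefore reduces to finding the earliest prefix of $Q$ (of length at most $j_2$) meeting these conditions.

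Second, I describe the algorithm as an interleaved walk along $P[1, i_2]$ and $Q$ maintained by orthogonal range successor queries. After checking $|P(1) - Q(1)| \leq \delta$ (returning $\infty$ if it fails), it initializes indices $(s, t) = (1, 1)$ with $|P(s) - Q(t)| \leq \delta$. At each step it issues a single layered-range-tree successor query: either to advance $s$ to the next index in $[s, i_2]$ whose $P$-value lies in $B(Q(t), \delta)$, or to advance $t$ to the next index in $(t, j_2]$ whose $Q$-value lies in $B(P(s), \delta)$. By \cref{t:successorProblem}, each such query costs $\mathcal{O}(\log n + \log m)$. The walk terminates and returns $w_{2,1} = t$ once $s = i_2$, and returns $\infty$ if at any point no valid successor exists within the allowed window.

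Third, I bound the number of iterations. Since at least one of $s$ or $t$ strictly increases in every iteration, and since the algorithm stops as soon as $s = i_2$ is matched with some $t \leq j_2$, the iteration count is bounded by both $i_2 \leq n$ and the final value $w_{2,1} \leq m$. The reason it is actually at most $\min\{m, n\}$, rather than $i_2 + w_{2,1}$, is that the successor queries jump over monotone runs in the ``non-advancing'' curve in a single step: pushing one pointer past all intermediate vertices of the other is one $\mathcal{O}(\log n + \log m)$ operation, not many. Consequently, the total iteration count is bounded by the complexity of whichever of $P[1, i_2]$ and $Q[1, w_{2,1}]$ is shorter, yielding the claimed $\mathcal{O}(\min\{m, n\}(\log n + \log m))$ running time.

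The main obstacle is proving minimality, namely that the greedy walk returns the \emph{smallest} valid $w$, not just some valid one. For this I would maintain the invariant that at every step, each candidate $w' < t$ already fails one of the four conditions of \cref{l:VarLemma37}, since any smaller prefix would either force $Q$ outside $B(P[1, i_2], \delta)$ or break the $2\delta$-monotone matching with $P[1, i_2]$; this argument leans on the $2\delta$-monotonicity and the minimum-edge-length property of the $\delta$-signature. A secondary subtlety is the pseudo-start vertex of $Q$, which may need to be inserted when the first vertex of $Q$ reaching $B(P(i_2), \delta)$ lies past $j_2$; this is handled by a single extra successor query and does not affect the asymptotic cost.
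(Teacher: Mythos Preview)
Your description of \cref{Alg:first_edge} does not match what the algorithm actually does, and this misreading undermines both the correctness and the running-time arguments. The algorithm does not walk \emph{forward} from $(1,1)$ using single-point successor queries; it first computes the minimum $w_{2,1}\le j_2$ with $P[1,i_2]\subset B(Q(w_{2,1}),\delta)$, and then walks \emph{backward} from $(s,t)=(i_2,w_{2,1})$ toward $(1,1)$. In each iteration the query is not ``the next vertex of $P$ in $B(Q(t),\delta)$'' but ``the minimum $s'\le s$ such that the \emph{entire prefix} $Q[1,t]$ lies in $B(P(s'),\delta)$'' (and symmetrically for $t'$). Correctness in the paper is not obtained via \cref{l:VarLemma37}: the backward walk itself exhibits a traversal, because whenever $s$ is reset to $s'$ the whole segment $P[s',s]$ is within $\delta$ of the single point $Q(t)$ by construction, and minimality of $w_{2,1}$ comes directly from Line~1. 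Your attempted specialization of \cref{l:VarLemma37} silently drops condition~(iv), that $P(1)$ or $Q(1)$ is a global minimum of its subcurve; this need not hold on the first extended-signature edge (see Figure~\ref{fig:MatchingUntilMinimum}), so your ``sufficient condition'' is not established as sufficient.

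The iteration bound likewise relies on the backward structure together with the fact that $P[1,i_2]$ lies in a $2\delta$-range. The paper shows that two non-vertex values $s_1<s_2$ on the same edge of $P$ cannot both arise as $s'$: one would get $Q(t_1)=P(s_1)+\delta$ and then $P(s_2)=Q(t_1)+\delta=P(s_1)+2\delta$, contradicting the $2\delta$-range. Since each while-iteration strictly decreases both $s$ and $t$ (otherwise it returns $\infty$), this yields $\mathcal{O}(\min\{m,n\})$ iterations. Your claim that ``successor queries jump over monotone runs in a single step'' does not correspond to any operation in the actual algorithm and, as stated, only gives a bound of $i_2+w_{2,1}$ rather than $\min\{m,n\}$.
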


\begin{algorithm}
    \caption{Computation of $w_{2,1}$\label{Alg:first_edge}}
    Find minimum $w_{2,1}\leq j_2$ such that $P[1, i_2]\subset B(Q(w_{2,1}), \delta)$\;
    \If{there does not exist one}
        {\textbf{return} $w_{2,1}=\infty$\;}
    Set $s=i_2$ and $t=w_{2,1}$\;
    \While{$s\neq 1$ or $t\neq 1$}{
        Find minimum $s'\leq s$ such that $Q[1, t]\subset B(P(s'), \delta)$\;
        \If{$s'=s$}
        {\textbf{return} $w_{2,1}=\infty$\;}
        Set $s=s'$\;
        Find minimum $t'\leq t$ such that $P[1, s]\subset B(Q(t'), \delta)$\;
        \If{$t'=t$}
        {\textbf{return} $w_{2,1}=\infty$\;}
        Set $t=t'$\;
    }
    \textbf{return} $w_{2,1}=w_{2,1}$\;
\end{algorithm}
\begin{proof}
    We show that if \cref{Alg:first_edge} returns $w_{2,1}<\infty$, then it is the minimum $j\leq j_2$ such that $d_F(P[1, i_2], Q[1, j])\leq \delta$ and otherwise there does not exist one.
    If $w_{2,1}<\infty$, we found a traversal: Every time we reassign $s$ (resp. $t$), the part $P[s', s]$ (resp. $Q[t', t]$) is matched to $Q(t)$ (resp. $P(s)$). By construction, it holds that their distance is at most $\delta$. Hence, $d_F(P[1, i_2], Q[1, w_{2,1}])\leq \delta$. See \cref{fig:MatchingUntilMinimum} for an example.

    Let $s^*\leq i_2$ be such that $P[1, i_2]\subset \overline{P(s^*) P(i_2)}$. Further, assume that there exist $t_1\leq j_2$ and $t_2\leq j_2$ such that $|P(i_2)-Q(t_1)|\leq \delta$ and $|P(s^*)-Q(t_2)|\leq \delta$. Then, there must exist a point $t'$ on $Q$ before $j_2$ such that $P[1, i_2]\subset B(Q(t'), \delta)$, because $P[1, i_2]$ is contained in a $2\delta$-range by the definition of $\delta$-signature. So, if the algorithm returns $w_{2,1}=\infty$ in Line~3, then there exist an $s\leq i_2$ such that $|P(s)-Q(t')|>\delta$ for all $t'\leq j_2$. Hence, for all $j\leq j_2$ it holds $d_F(P[1, i_2], Q[1, j])>\delta$. This also shows that if $w_{2,1}<\infty$, it is the minimum $t$ such that $d_F(P[1, i_2], Q[1, t])\leq \delta$. 

    In the case that the algorithm returns $w_{2,1}=\infty$ in Line~8 or~12, it holds that $P(s')\neq P(s)$ for all $s'<s$, where $s$ is the value at the time it stops. Therefore, $P(s)$ is the maximum or minimum of $P[1, s]$, respectively $Q(t)$ of $Q[1, t]$. 
    Let $s^*\leq s$ be such that $P[1, s]\subset \overline{P(s^*) P(s)}$. For all $t'< t$, it holds that $|P(s)-Q(t)|\leq \delta$. Therefore, $|P(s^*)-Q(t')|>\delta$. In the same way, it follows that there exists a point $t^*<t$ such that $|Q(t^*)-P(s')|>\delta$ for all $s'<s$. Hence, $d_F(P, Q)>\delta$, because $P(s^*)$ is not matchable to anything before $Q(t)$ and $Q(t^*)$ not to anything before $P(s)$.

    It remains to prove the running time.
    Assume there exist two values $s_1<s_2$ on the same edge of $P$ that were both $s'$ at some point during the algorithm and are no vertices. Without loss of generality, let $P(s_1)<P(s_2)$. Then, $P(s_1)=\min(P[1, s_1])$ and $P(s_2)=\min(P[1, s_2])$. Let $t_1$ be the value $t'$ attained after $s'=s_1$. Then, $Q(t_1)=P(s_1)+\delta$ as otherwise the algorithm would have stopped the iteration after. Further, since $P(s_2)=\min(P[1, s_2])$, it must hold that $P(s_2)=Q(t_1)+\delta=P(s_1)+2\delta$. This is a contradiction to $P[1, i_2]$ is contained in a $2\delta$-range and $P(s_1)$, $P(s_2)$ are no vertices. The same arguments work for $Q(t_1)$ and $Q(t_2)$. Hence, there are at most $\mathcal{O}(\min\{m, n\})$ steps until $s=1$ and $t=1$ or the algorithm stops. Each iteration takes  $\mathcal{O}(\log n+ \log m)$ time by \cref{t:successorProblem}. So, the total running time is in $\mathcal{O}(\min\{m, n\}(\log n +\log m))$.
\end{proof}

\begin{figure}
    \centering
    \includegraphics{Figures/Assignment_Extremum_First_Edge.pdf}
    \caption{Example of a matching computed by \cref{Alg:first_edge}. The parts matched together are of the same color.}
    \label{fig:MatchingUntilMinimum}
\end{figure}


\subsubsection{Step~(C): Computation of the Exit Points}\label{s:StC}
In Step~(C), we compute iteratively the right and top exit points. For $k=1, \dotso, s_Q-1$, we set $v_{k,1}=\infty$ and $w_{1, k}=\infty$ since there do not exist exit points for those values by Property~(c) and (e) of the definition.

Given the values $v_{k, l}$ and $w_{k, l}$, we compute the values $v_{k, l+1}$ and $w_{k+1, l}$ in the following way: 
If $v_{k, l}\neq \infty$, set $w_{\min}=j_l$. Otherwise, set $w_{\min}=w_{k,l}$. Then, using the layered range tree, we can compute the minimum index $w_{k+1,l}\leq j_{l+1}$ such that $(w_{k+1,l}, Q(w_{k+1,l}))\in [w_{\min}, j_{l+1}]\times B(P(i_{k+1}), \delta)$. If there does not exist such a point, we set $w_{k+1, l}=\infty$. Symmetrically, if $w_{k, l}\neq \infty$, set $v_{\min}=i_k$. Otherwise, set $v_{\min}=v_{k,l}$. Then, we compute the minimum value $v_{k,l+1}$ such that $(v_{k,l+1}, Q(v_{k,l+1}))\in [v_{\min}, i_{k+1}]\times B(Q(j_{l+1}), \delta)$. If there does not exist such a point, we set $v_{k, l+1}=\infty$. 
The orange super cell in \cref{fig:Reachability_Graph} visualizes this construction. Here, $w_{k, l}=\infty$ and the red bars show the query intervals for the indices.

\begin{lemma}\label{l:step C}
    Given the indices $v_{k,l}$ and $w_{k,l}$, the indices $v_{k, l+1}$ and $w_{k+1, l}$ can be computed in $\mathcal{O}(\log n)$ time.
\end{lemma}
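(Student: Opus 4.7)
The plan is straightforward: the construction described immediately before the lemma reduces the computation of each exit point to a single orthogonal range successor query, so the running time will follow from \cref{t:successorProblem}.

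Recall that at preprocessing time we built a layered range tree on the point set $\{(i, P(i)) : 1 \le i \le n\}$ of size $\mathcal{O}(n \log n)$, and in Step~(A) we built an analogous layered range tree on $\{(j, Q(j)) : 1 \le j \le m\}$. First, I would observe that selecting the correct value of $w_{\min} \in \{j_l, w_{k,l}\}$, respectively $v_{\min} \in \{i_k, v_{k,l}\}$, takes only $\mathcal{O}(1)$ time, since each choice requires only branching on whether $v_{k,l}$, respectively $w_{k,l}$, equals $\infty$.

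Next, computing $w_{k+1,l}$ amounts to finding the point of minimum $x$-coordinate in the axis-aligned rectangle $[w_{\min}, j_{l+1}] \times B(P(i_{k+1}), \delta)$ among the stored points for $Q$. By \cref{t:successorProblem}, this query costs $\mathcal{O}(\log m)$. Symmetrically, computing $v_{k,l+1}$ is a single orthogonal range successor query on the tree for $P$ over the rectangle $[v_{\min}, i_{k+1}] \times B(Q(j_{l+1}), \delta)$, which takes $\mathcal{O}(\log n)$ time. Since $m \le n$ under the hypothesis of \cref{t:decision}, the overall cost is $\mathcal{O}(\log n)$.

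The only conceptual subtlety, and the main thing requiring thought beyond invoking \cref{t:successorProblem}, is verifying that the two choices of lower bound $w_{\min}$ and $v_{\min}$ are indeed the right ones. The choice reflects whether the coupled $\delta$-visiting order enters the super cell $[i_k, i_{k+1}] \times [j_l, j_{l+1}]$ already advanced past the left boundary (in which case $v_{k,l}$ is finite and serves as the tight lower bound) or only through the bottom boundary (in which case $w_{k,l}$ provides the earliest relevant starting index). However, this is really an inductive correctness argument about Step~(C) as a whole rather than about the per-cell running time, so I would cleanly separate it from the $\mathcal{O}(\log n)$ claim and keep the present proof focused on the single-step complexity.
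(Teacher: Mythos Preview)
Your running-time argument is correct and matches the paper's: both simply observe that each exit point is obtained by one orthogonal range successor query, and invoke \cref{t:successorProblem}.

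The one place where you diverge is that you deliberately \emph{omit} the correctness argument, saying you would handle it elsewhere as an inductive property of Step~(C). The paper, however, treats correctness as part of this very lemma: it checks that the value $v^*$ returned by the query really equals the exit point $v_{k,l+1}$ as defined, namely that (i) if $v^*\neq\infty$ then there exist tuples $(v_1,w_1),\dotsc,(v_r,w_r),(v^*,j_{l+1})$ satisfying properties~(a)--(e) of the exit-point definition, and (ii) $v^*$ is the minimum such index. Since the lemma asserts that the \emph{indices $v_{k,l+1}$ and $w_{k+1,l}$} can be computed (not merely that two range queries can be answered), this verification is part of what the statement claims; deferring it leaves the present proof incomplete. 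The argument itself is short: finiteness of $v_{k,l}$ or $w_{k,l}$ gives the prefix of tuples, appending $(v^*,j_{l+1})$ preserves~(a)--(e), and minimality follows because any valid $v_{k,l+1}$ must lie in the queried rectangle $[v_{\min},i_{k+1}]\times B(Q(j_{l+1}),\delta)$.
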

\begin{proof}
    We use the algorithm above to compute $v_{k, l+1}$ and $w_{k+1, l}$. Its running time is in $\mathcal{O}(\log n)$ by \cref{t:successorProblem}.
    Let $v^*$ be the value computed by the algorithm for $v_{k, l+1}$.
    If $v^*\neq\infty$, then $v_{k,l}\neq \infty$ or $w_{k,l}\neq \infty$ and one of them is an exit point. Therefore, there exist tuples $(v_1, w_1), \dotso, (v_r, w_r), (v^*, j_{l+1})$ fulfilling the properties (a)-(e) of the definition of exit points. It remains to prove that it is the minimum.
    Let $v_{k, l+1}$ be the left exit point. Then, by definition of the exit points it holds that $v_{k,l}\neq \infty$ or $w_{k,l}\neq \infty$. 
    Further, it must hold that $|P(v^*)-Q(j_{l+1})|\leq \delta$ and $v_{k, l+1}\leq i_{k+1}$ and $w_{\min}\leq v^*$. Therefore, $v^*=v_{k, l+1}$. Symmetrically, it follows that the construction of $w_{k+1,l}$ is correct.
\end{proof}

\subsection{Critical values for computation}
To compute the exact \Fm, we search over a set of critical values for $\delta$ using the decision algorithm in each step. In particular, we define the set of values of $\delta$ for which the outcome of algorithm of \cref{t:decision} might change. 

For curves in~1D, this set consists of (i) the distances between a vertex of $P$ and one of $Q$, i.e., $|P(i)-Q(j)|$ for $i=1, \dotso, n$, $j=1, \dotso, m$, and (ii) the $\delta$ values of \cref{l:hierachicalSignature} where the signature of $P$ or $Q$ changes. 
For any two values in the interior of an atomic interval of this set of critical values, the decision algorithm returns the same answer.
Hence, the continuous \F between $P$ and $Q$ must be one of the critical values. 
In the following, we assume without loss of generality that $m\leq n$.

Our search over these values is carried out in several (but a constant number) of stages. In the end, we combine the results of the different stages by taking the minimum over all valid answers.

We first discuss how to search over the values of type (i).
We can extract an (increasing-order) sorted list $A$ of the $n$ vertices of $P$ from the data structure in linear time. In addition, we compute a sorted list $B$ of the $m$ vertices of $Q$. Consider the implicit $n\times m$ matrices $M_1$ and $M_2$, where the entry $M_1(i, j)$ is the value $\max(0,a_i-b_j)$, and the entry $M_2(i, j)$ is the value $\max(0,b_j-a_i)$. Observe that both matrices consist of entries that are sorted in each row and column.
Now, Fredrickson and Johnson \cite{FJ84} showed that we can find the $k$-th smallest item in $M_1$ (resp. $M_2$) in $\mathcal{O}(m \log (2n/m))$ time. Hence, we can perform an implicit binary search over all values in $M_1$ and $M_2$ with total running time in $\mathcal{O}(\log n (m\log (2n/m)+m^2\log n))$. 

For the critical values of type (ii), the $\mathcal{O}(n)$values of $\delta$, where the $\delta$-signature changes, can be extracted in sorted order from the data structures of $P$ and $Q$. Once we have these values stored in sorted order, we can perform a binary search over them using $\mathcal{O}(\log n)$ calls to the decision algorithm.

Overall, using \cref{t:decision} we obtain the following result.
\begin{theorem}\label{t:distance oracle}
    Given a time series $P$ of complexity $n$, we can construct a data structure that returns $d_F(P, Q)$ for a time series $Q$ of complexity $m\leq n$. It has size and preprocessing time in $\mathcal{O}(n\log n)$ and its query time lies in $\mathcal{O}(m^2\log^2 n)$.
\end{theorem}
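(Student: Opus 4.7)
The plan is to combine the decision oracle of \cref{t:decision} with a binary search over the set of critical values identified by Alt and Godau. In 1D these candidates are the vertex--vertex distances $|P(i)-Q(j)|$ and the half--distances $|P(i)-P(j)|/2$ and $|Q(i)-Q(j)|/2$; there are $\Theta(n^2)$ of them and the \Fm\ is known to equal one of them. It therefore suffices to locate the smallest critical value that is accepted by the decision oracle.

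For the preprocessing, I would keep, in addition to the structures of \cref{t:decision}, a copy of the vertices of $P$ sorted by value; this costs $\mathcal{O}(n\log n)$ additional time and $\mathcal{O}(n)$ additional space, so the overall bounds of $\mathcal{O}(n\log n)$ are preserved. At query time, I sort the vertices of $Q$ in $\mathcal{O}(m\log m)\subseteq\mathcal{O}(n\log n)$ time and build the layered range tree of $Q$. I then view the three critical families as implicit matrices indexed by the two sorted vertex orders. Each such matrix is not itself monotone, because of the absolute value, but by splitting it according to the sign of the bracketed difference---at a suitable diagonal for the same-curve families---I obtain a constant number of submatrices that are non-decreasing along one axis and non-increasing along the other. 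Each matrix entry is computable in constant time from the sorted arrays, so the matrices are only ever accessed implicitly.

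With a constant number of such row/column--monotone matrices in hand, I apply the Frederickson--Johnson selection algorithm~\cite{FJ84} with the decision oracle of \cref{t:decision} as comparator. Selecting the smallest accepted value uses $\mathcal{O}(\log(nm))=\mathcal{O}(\log n)$ decision queries (since $m\leq n$), each costing $\mathcal{O}(m^{2}\log n)$, giving a total query time of $\mathcal{O}(m^{2}\log^{2}n)$ on top of the $\mathcal{O}(m\log n)$ spent on the initial sorting and range-tree construction, as claimed. I do not expect a serious technical obstacle here: the only point that requires attention is verifying that the sign-splitting really does expose bi-monotone submatrices to which Frederickson--Johnson applies, which is a short case analysis already sketched in the paragraph preceding the statement.
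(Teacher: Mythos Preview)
Your proposal is correct and matches the paper's own argument essentially step for step: sort the vertices of both curves, view the three families of 1D critical values as implicit matrices, split each by sign (at the diagonal for the same-curve families) into a constant number of row/column--monotone submatrices, and run Frederickson--Johnson with the decision oracle of \cref{t:decision}, for $\mathcal{O}(\log n)$ calls at $\mathcal{O}(m^{2}\log n)$ each. The only detail you leave implicit is that locating the split boundary in the $n\times m$ matrix costs $\mathcal{O}(m\log n)$ time via binary search in each row, which is dominated anyway.
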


This theorem leads directly to a new algorithm to compute the \F in 1D:
\begin{corollary}
    Let $P$ be a time series of complexity $n$ and $Q$ of complexity $n^{\alpha}$ with $\alpha\in[0,1]$. Then, it is possible to compute the \F between them in $\mathcal{O}(n^{2\alpha}\log^2 n+n\log n)$ time.
\end{corollary}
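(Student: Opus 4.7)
The plan is to derive this corollary as a direct consequence of the distance oracle guarantee stated in the previous theorem. First, I would invoke the data structure on $P$: by that theorem, we can preprocess $P$ in time $\mathcal{O}(n\log n)$ to produce a structure of size $\mathcal{O}(n\log n)$ supporting exact Fréchet distance queries with any time series of complexity at most $n$.

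Next, I would simply pose a single query with $Q$. Since $|Q| = n^{\alpha}$ with $\alpha \in [0,1]$, we have $m := n^{\alpha} \leq n$, so $Q$ is a valid query. The query time bound of $\mathcal{O}(m^2 \log^2 n)$ substitutes to $\mathcal{O}(n^{2\alpha}\log^2 n)$, and the result of the query is exactly $d_F(P,Q)$.

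Adding preprocessing and query costs yields the total running time
\[
\mathcal{O}(n\log n) + \mathcal{O}(n^{2\alpha}\log^2 n) = \mathcal{O}(n^{2\alpha}\log^2 n + n\log n),
\]
which matches the claimed bound. There is no real obstacle here beyond verifying the hypothesis $m \leq n$ of the oracle theorem (which holds trivially for $\alpha \in [0,1]$) and observing that preprocessing followed by a single query computes the exact Fréchet distance; the content of the corollary is entirely carried by the distance oracle construction established earlier.
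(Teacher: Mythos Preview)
Your proposal is correct and matches the paper's approach exactly: the paper presents this corollary as an immediate consequence of the distance oracle theorem, without even writing out a proof. You have simply made explicit the preprocessing-plus-single-query argument that the paper leaves implicit.
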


\section{Fr\'echet Distance Oracle with Subcurve Queries}
The data structure of \cref{s:distnaceOracle} can be modified such that it can also answer queries for subcurves of $P$ using the same query algorithm. The layered range tree storing the points $(1, P(1)), (2, P(2)), \dotso, (n, P(n))$ can also be used to answer the needed queries for subcurves of $P$. Therefore, the only additional requirement is the ability to compute the $\delta$-signature for subcurves of $P$. 

The next lemma, shows how the $\delta$-signatures for subcurves of $P$ can be computed when the $\delta$-signature of $P$ is given.

\begin{lemma}\label{l:SubcurveSignature}
    Let $P$ be a time series of complexity $n$ and $\langle P(i_1), \dotso, P(i_{s_P})\rangle$ its $\delta$-signature. Then, for values $1\leq s\leq t\leq n$, it is possible to compute the extended $\delta$-signature $P'$ of $P[s, t]$ in $\mathcal{O}((l-k)+\log n)$ time, when given the layered range tree storing the points $(1, P(1)), \dotso, (n, P(n))$ and the values $i_{k-1}< s\leq i_k\leq i_{k+1}\leq\cdots\leq i_l\leq t<i_{l+1}$.
\end{lemma}
\begin{proof}
    If there does not exist an $i_k$ such that $s\leq i_k\leq t$, then compute the global minimum and maximum of $P[s, t]$. Let the indices of those two points be $i^*\leq i^{**}$. Further, let $i'\in [s, i^*]$ such that $P[s, i^*]\subset \overline{P(i')P(i^*)}$ and $i''\in [i^{**}, t]$ such that $P[i^{**}, t]\subset \overline{P(i'')P(i^{**})}$. Those vertices can be found in $\mathcal{O}(\log n)$ time using the layered range tree by \cref{t:successorProblem}. The vertices of the extended $\delta$-signature of $P[s, t]$ are $P(s), P(i^*), P(i^{**}), P(t)$ and if $P[s, t]\notin B(\overline{P(s) P(t)}, \delta)$, then in addition $P(i')$ if $|P(s)-P(i^*)|>\delta$ and $P(i'')$ if $|P(i^{**})-P(t)|>\delta$. For this curve, all conditions of the definition of an extended $\delta$-signature are fulfilled.

    Otherwise, we compute $i^*\in [s, i_k]$ such that $P[s, i_k]\subset \overline{P(i^*) P(i_k)}$, and $i'\in [s, i^*]$ such that $P[s, i^*]\in \overline{P(i') P(i^*)}$, and $i^{**}\in [i_l, t]$ such that $P[i_l, t]\subset \overline{P(i_l) P(i^{**})}$, and $i''\in [i^{**}, t]$ such that $P[i^{**}, t]\in \overline{P(i^{**})P(i'')}$. Those points are well defined by construction and the definition of a $\delta$-signature and we can compute those points using the layered range tree in $\mathcal{O}(\log n)$ time. 
    The vertices of the extended $\delta$-signature of $P[s, t]$ are 
    $P(s)$, $P(t)$, $P(i_k), P(i_{k+1}), \dotso, P(i_l)$ and in addition
    \begin{itemize}
        \item[i)] if $|P(s)-P(i_k)|\leq \delta$ and $|P(i^*)-P(i_k)|> 2\delta$, then $P(i')$ and $P(i^*)$,
        \item[ii)] if $|P(s)-P(i_k)|> \delta$ and $|P(s)-P(i^*)|\leq \delta$, then $P(i^*)$,
        \item[iii)] if $|P(s)-P(i_k)|>\delta$ and $|P(s)-P(i^*)|>\delta$, then $P(i')$ and $P(i^*)$,
        \item[iv)] if $|P(t)-P(i_l)|\leq \delta$ and $|P(i^{**})-P(i_l)|> 2\delta$, then $P(i'')$ and $P(i^{**})$,
        \item[v)] if $|P(t)-P(i_l)|> \delta$ and $|P(t)-P(i^{**})|\leq \delta$, then $P(i^{**})$, and
        \item[vi)] if $|P(t)-P(i_l)|>\delta$ and $|P(t)-P(i^{**})|>\delta$, then $P(i'')$ and $P(i^{**})$.
    \end{itemize}
    By construction, $P(i^*)$ is the global minimum (resp. maximum) of $P[s, i_k]$ if $P[i_{k-1}, i_k]$ is $2\delta$-monotone increasing (resp. decreasing). Therefore, $P[s, i^*]$ is contained in a $2\delta$-range. So, the properties of the definition of a {$\delta$-signature} follow by construction and because $\langle P(i_1), \dotso, P(i_{s_P})\rangle$ is the $\delta$-signature of~$P$. All those points can be found in $\mathcal{O}((l-k)+\log n)$ time.
\end{proof}

The data structure of \cref{t:DS_signatures} can compute the $\delta$-signatures of $P$. Using this and the lemma above, it is possible to construct a data structure that can compute the extended $\delta$-signatures of subcurves of $P$.

\begin{lemma}\label{l:SubcurveDS}
    Given a time series $P$ of complexity $n$, there exists a data structure using size and preprocessing time in $\mathcal{O}(n\log n)$ that can compute for any values $1\leq s\leq t\leq n$ and any $\delta\geq 0$, the extended $\delta$-signature $P'$ of $P[s, t]$ in time in $\mathcal{O}(m+\log n)$, where $m$ denotes the complexity of $P'$. Further, for $1\leq s\leq t\leq n$ and a number $m\in \mathbb{N}$, it can decide in $\mathcal{O}(\log n)$ time whether the complexity of the $\delta$-signature $P'$ is in $\mathcal{O}(m)$. 
\end{lemma}
\begin{proof}
    Using the data structure of \cref{t:DS_signatures} and \cref{l:hierachicalSignature}, we can compute in $\mathcal{O}(n\log n)$ time the values $\delta(k)$ for $k=1, \dotso, n$ such that $P(k)$ is a vertex of the $\delta$-signature of $P$ for $\delta<\delta(k)$. Then, we store the points $(k, \delta(k))$ in a layered range tree. The points $(k, P(k))$ for $k=1\dotso, n$ are stored in a second layered range tree. The size of both data structures is in $\mathcal{O}(n\log n)$ by \cref{t:successorProblem}.
    Now, given $1\leq s\leq t\leq n$ and $\delta$, we can compute the indices $i_k\in [s, t]$ such that $\delta(k)>\delta$ in $\mathcal{O}(m'+\log n)$ time using the first layered range tree, where $m'$ is the number of the indices with those properties. By construction, those are the indices $i_k, i_{k+1}, \dotso, i_l$ of \cref{l:SubcurveSignature}. Hence, using \cref{l:SubcurveSignature}, we can compute the extended $\delta$-signature of $P[s, t]$ in $\mathcal{O}(m+\log n)$ time.

    To check whether the complexity of the $\delta$-signature of $P[s, t]$ is in $\mathcal{O}(m)$, we compute the number of point stored in the first layered range tree that are contained in $[s, t]\times [\delta, \infty]$. This can be done in $\mathcal{O}(\log n)$ time by \cref{t:successorProblem}. Further, by \cref{l:SubcurveSignature}, it holds that if this number is in $\mathcal{O}(m)$, then the complexity is in $\mathcal{O}(m)$. Otherwise the complexity is higher.
\end{proof}

Using the data structure of \cref{l:SubcurveDS}, we get the following theorem.

\begin{theorem}
    Given a time series $P$ of complexity $n$, we can construct a data structure that uses size and preprocessing time in $\mathcal{O}(n\log n)$ and supports the following type of queries. For a time series $Q$ of complexity $m\leq n$, two values $1\leq s \leq t\leq n$ and a value $\delta \geq 0$, it can decide whether $d_F(P[s, t], Q)\leq \delta$ in $\mathcal{O}(m^2\log n)$ time. Further, it can compute the continuous \F between $P[s, t]$ and $Q$ in $\mathcal{O}(m^2\log^2 n)$ time.
\end{theorem}
\begin{proof}
    We store the data structure of \cref{l:SubcurveDS} and the layered range tree storing the points $(1, P(1)), (2, P(2)), \dotso, (n, P(n))$ using size and preprocessing time in $\mathcal{O}(n\log n)$ by \cref{t:successorProblem} and \cref{l:SubcurveDS}. Then, we can use the same query algorithm as it was used in \cref{t:decision} and \ref{t:distance oracle} and the query time follows by those theorems.
\end{proof}

\section{Conclusions}
In this paper, we presented an exact \F oracle in 1D. It stores a curve $P$ using space and query time in $\mathcal{O}(n\log n)$ and computes the \F to a query curve $Q$ of complexity $m$ in $\mathcal{O}(m^2\log^2 n)$ time. Moreover, this leads directly to a faster algorithm for computing the \F in 1D in the imbalanced case. This shows that the 2D lower bound by Bringmann~\cite{bringmann2014walking} does not fully translate to the 1D setting. 


\bibliography{Literature.bib}

\appendix

\section{Proof of \cref{l:VarLemma37}}
Bringmann, Driemel, Nusser and Psarros \cite{BDNP21} showed \cref{l:VarLemma37} for the case that $Q(1)$ is a global minimum (resp. maximum) and $Q(m)$ is a global maximum (resp. minimum). 
They gave a traversal of $P,Q$ to show that $d_F(P,Q)\leq \delta$. This is done as follows.
Assume that $P, Q$ are $2\delta$-monotone increasing and $Q(1)$ is a global minimum of $Q$, as the other cases are symmetric. The traversal tries to maintain the following two invariants. Here, the position during the traversal is denoted with $(s, t)\in [1,n]\times [1, m]$.
\begin{enumerate}
    \item[(1)] $P$ and $Q$ are in position $(s, t)$ such that $P(s)=Q(t)+\delta$.
    \item[(2)] The suffix of $Q$ is strictly greater than the current value $Q(t)$, i.e., $Q(t')>Q(t)$ for all $t'>t$.
\end{enumerate}
In the beginning of the traversals, we traverse $P$ until it first reaches $Q(1)+\delta$, while in~$Q$ we stay in $Q(1)$. When Invariant (1) would be violated, we traverse $P$ while staying in $Q(t)$ on~$Q$ until the next time we reach a position $s'$ on $P$ such that $P(s')=P(s)$. When Invariant~(2) would be violated, we continue traversing $Q$ while staying in $P(s)$ on $P$ until we reach the largest position $t'>t$ such that $Q(t')=Q(t)$. Whenever we reached the end of one time series, we stay there and finish traversing the other time series till the end.

\begin{figure}
    \centering
    \includegraphics{Figures/Alg_for_Lemma_37.pdf}
    \caption{Visualization of the algorithm realizing the traversal in \cref{l:VarLemma37}. The red stems from the case when we restore Invariant (1) and the orange when we restore Invariant (2).}
    \label{fig:Alg Lemma~37}
\end{figure}

\begin{proof}[Proof of \cref{l:VarLemma37}]
    We prove the lemma for the case that $P, Q$ are $2\delta$-monotone increasing and $Q(1)$ is a global minimum of $Q$. The other cases follow symmetric. 
    It holds that $Q(n)\leq P(n)+\delta$ by Condition (ii). Further, by Condition (iii) and (v) it follows that $Q(t)\leq P(n)+\delta$ for all $t\in [1, m]$. Similarly, it follows that $P(s)\leq Q(m)+\delta$ for all $s\in [1, n]$.  
    
    In the beginning, we stay in $Q$ at $Q(1)$ while traversing $P$ until first reaching $Q(1)+\delta$. By Condition (ii), we have $P(1)\leq Q(1)+\delta$ and hence by Condition (iii) all points have distance at most $\delta$ to $Q(1)$. If we reach $P(n)$ before reaching $Q(1)+\delta$, we traverse $Q$ until the end while staying at $P(n)$. It holds that $P(n)-\delta \leq Q(1)\leq Q(t)\leq P(n)+\delta$ for all $t\in [1, m]$ by the observation above. Hence, we found a feasible traversal.
    
    From now on, we traverse $P$ and $Q$ with the same speed in image space, unless one of the two invariants would be violated by continuing the traversal. Here, it holds that $|P(s)-Q(t)|\leq \delta$ by the proof of Lemma 37 in \cite{BDNP21} for all $(s, t)$  until we reached the end of one time series.

    If we reach $P(n)$ while traversing $P$ and $Q$ with same speed in image space, it holds $P(n)-\delta\leq Q(t)\leq Q(t')\leq P(n)+\delta$ for all $t\leq t'\leq m$ by Invariant (1) and (2) and the observation above.
    If we reached $P(n)$ while restoring Invariant (1), $P(n)\leq Q(t)+\delta$ and $Q(t)\leq Q(t')$ for all $t'\geq t$ by Invariant (i), (ii). Hence, $P(n)-\delta\leq Q(t)\leq Q(t')\leq P(n)+\delta$ by the observation above.
    Assume we reached $Q(m)$ while traversing $P$ and $Q$ with same speed in image space or restoring Invariant (2). Then, $Q(m)-\delta=P(s)-2\delta\leq P(s')\leq Q(m)+\delta$ for all $s\leq s'\leq n$ by Invariant (1) and Condition (i).
\end{proof}

\end{document}